\newcommand{\cket}[1]{|#1\rangle}
\newcommand{\bracket}[2]{\langle #1|#2\rangle}
\newcommand{\mcl}[1]{\ensuremath{\mathcal{#1}}}
\newcommand{\norme}[2]{\left|\left|#1\right|\right|_{#2}}
\newcommand{\Acal}{{\mcl A}}
\newcommand{\Bcal}{{\mcl B}}
\newcommand{\EcalKDC}{{\mcl E}_{\mathrm{KD+}}}
\newcommand{\EcalKDCpu}{{\mcl E}_{\mathrm{KD+}}^{\mathrm{pure}}}
\newcommand{\spanRAB}{{\mathrm{span}_{\R}(\Acal\cup\Bcal)}}
\newcommand{\spanR}{{\mathrm{span}_{\R}}}
\newcommand{\Hcal}{{\mcl H}}
\newcommand{\Ncal}{{\mcl N}}
\newcommand{\Ker}{{\mathrm{Ker}}}
\newcommand{\Ucal}{{\mcl U}}
\newcommand{\N}{\mathbb{N}}
\newcommand{\Z}{\mathbb{Z}}
\newcommand{\R}{\mathbb{R}}
\newcommand{\C}{\ensuremath{\mathbb{C}}}
\newcommand{\bbone}{\mathbb{I}}
\newcommand{\mab}{m_{\Acal,\Bcal}}
\newcommand{\conv}[1]{\mathrm{conv}\left(#1\right)}
\newcommand{\VR}{V_{\mathrm{KDr}}}
\renewcommand{\Re}[1]{\mathrm{Re}#1}
\renewcommand{\Im}[1]{\mathrm{Im}#1}
\newcommand{\Ran}[1]{\mathrm{Ran}\left(#1\right)}
\renewcommand{\H}{\mcl{H}}
\newcommand{\SAH}{\mathcal S(\mathcal H)}
\newcommand{\SAO}[1][d]{\mcl{S}_{#1}(\C)}
\newcommand{\MatC}[1]{\ensuremath{\mcl{M}_{#1}(\C)}}
\newcommand{\MatR}[1]{\mcl{M}_{#1}(\R)}
\newcommand{\IntEnt}[2]{\llbracket #1 , #2 \rrbracket}
\newcommand{\convAB}{\conv{\Acal \cup \Bcal}}
\renewcommand{\epsilon}{\varepsilon}
\newtheorem{Theorem}{Theorem}
\newtheorem{Lemma}[Theorem]{Lemma}
\newtheorem{Prop}[Theorem]{Proposition}
\newcommand{\Uex}{\ensuremath{{U_{\star}}}}
\newcommand{\Triv}{\ensuremath{\mathcal{T}}}
\newcommand{\NT}{\ensuremath{\mathcal{N}}}
\newcommand{\rcu}[1][U]{\ensuremath{\widetilde{C}^{#1}}}
\newcommand{\id}[1][d]{\ensuremath{I_{#1}}}
\newcommand{\inpr}[2]{\ensuremath{\langle #1, #2 \rangle}}
\definecolor{prettyblue}{RGB}{047,103,177}
\definecolor{prettyred}{RGB}{191,044,035}
\begin{document}

\title{Almost no experiments have classical Kirkwood-Dirac representations}

\author{Christopher Langrenez}
\email{christopher.langrenez@univ-lille.fr}
\affiliation{Univ. Lille, CNRS, Inria, UMR 8524, Laboratoire Paul Painlev\'e, F-59000 Lille, France}
\author{Wilfred Salmon}
\affiliation{Hitachi Cambridge Lab., J.J. Thomson Avenue, Cambridge CB3 0US, UK}
\affiliation{DAMTP, Centre for Mathematical Sciences, Univ. of Cambridge, CB30WA, UK}
\author{Stephan De Bi\`evre}

\affiliation{Univ. Lille, CNRS, Inria, UMR 8524, Laboratoire Paul Painlev\'e, F-59000 Lille, France}
\author{Jonathan  J. Thio}

\affiliation{Cavendish Lab., Department of Physics, Univ. of Cambridge, Cambridge CB3 0US, UK }
\author{Christopher K. Long}

\affiliation{Hitachi Cambridge Lab., J.J. Thomson Avenue, Cambridge CB3 0US, UK}
\affiliation{Cavendish Lab., Department of Physics, Univ. of Cambridge, Cambridge CB3 0US, UK }
\author{David R. M. Arvidsson-Shukur}

\affiliation{Hitachi Cambridge Lab., J.J. Thomson Avenue, Cambridge CB3 0US, UK}

\begin{abstract}

A central problem in quantum information is determining quantum-classical boundaries. 
 In the quasiprobability framework, a state is called classical if it is represented by a quasiprobability distribution that is positive, and thus a probability distribution. In recent years, the Kirkwood-Dirac (KD) distributions have gained much interest due to their numerous applications in modern quantum-information research. A particular advantage of the KD distributions is that they can be defined with respect to arbitrary observables.  Here,  we show that if two $d$-dimensional observables are picked at random, the set of classical (positive) states of the resulting KD distribution is a minimal polytope of dimension $2(d-1)$ with  $2d$ explicitly known vertices. This implies minimality of the sets of KD-real observables, of KD-positive measurement elements and of KD-positivity-preserving unitaries. We show how these results have implications on robust observations of nonclassical phenomena, on classical simulations of quantum circuits, and on foundations of quantum theory.
\end{abstract}

\maketitle

\textit{Introduction:---}Quasiprobability distributions are useful tools to study and characterize quantum experiments. They are objects similar to probability distributions but can assume negative or nonreal values---hallmarks of quantum phenomena. The Wigner function, which represents quantum states with distributions over the eigenvalues of conjugate observables, has played a pivotal role in the development of quantum optics~\cite{leonhardt, Serafini}.  More generally, the Wigner function, and other  quasiprobability distributions, are practical tools for investigating the classical-quantum boundary in various quantum-information-processing tasks \cite{Veitch_2012, Veitch_2014, Lostaglio18, YungHalp19,Raussendorf20,  Arvidsson-Shukur2020, Lostaglio20, lupu2022negative}.
Such advantages can be obtained only if the quasiprobability distributions characterizing the states, observables and processes involved are not all positive \cite{Spekkens08, Schmid_2024, schmid2024kirkwooddirac}. This  highlights the importance of identifying the (convex) set of positive states, meaning those states that have positive quasiprobability distributions. 

For example, the only pure Wigner-positive states are the pure Gaussian states~\cite{hudson1974wigner, gross2006}. However, a similar characterization of the set of mixed Wigner-positive states is currently unknown  and the subject of significant research ~\cite{gross2006, mandilara2009extending, vanherstraetencerf2021,chabaudetal2021}. In particular, there exist Wigner positive states that are not convex combinations (mixtures) of gaussian states \cite{Genoni, gross2006, vanherstraetencerf2021, hertz_decoherence_2023}.

Recently, the Kirkwood-Dirac (KD) distributions \cite{Kirkwood33,Dirac45, arvidssonshukur2024properties} have  come to the foreground as versatile mathematical tools to study discrete-variable quantum-information processes. 
Consider the sets of projectors $\Acal=\{|a_i\rangle\langle a_i| \}$ and  $\Bcal=\{|b_j\rangle\langle b_j| \}$ associated with two bases $\{ \cket{a_{i}} \}$ and $\{ \cket{b_j} \}$ in a $d$-dimensional Hilbert space $\Hcal$. Typically, these bases are eigenbases of observables $A$ and $B$ determined by the physical context of interest. Unlike the observables used to define the Wigner function,  $A$ and $B$ need not be conjugate.  The KD distribution associates to each quantum state $\rho$ a quasiprobability distribution $Q(\rho)$ with entries given by
\begin{equation}\label{eq:Qdef}
Q_{ij}(\rho)=\langle b_j|a_i\rangle \langle a_i|\rho|b_j\rangle \in \mathbb{D} ,
\end{equation}
 where $\mathbb{D}$ is the closed complex unit disk. In analogy with the definition of Wigner-positive states, if $Q_{ij}(\rho)\in [0,1]$, for all $i,j$, such that it is a proper probability distribution, we call $\rho$ KD-positive, or classical. If $Q(\rho)$ takes negative or nonreal entries, we call $\rho$ KD nonpositive, or nonclassical.

The KD distributions' negativity or nonreality underlie  quantum advantages in diverse fields~\cite{arvidssonshukur2024properties, lostaglio2023kirkwood}.  
Examples include metrology \cite{Arvidsson-Shukur2020, lupu2022negative, Jenne22}, weak values \cite{Aharonov88,Dressel14, Pang14, Pusey14, Dressel15, Pang15},  thermodynamics \cite{YungHalp17, Lostaglio18, Lostaglio20,Levy20,lostaglio2023kirkwood} and information scrambling \cite{YungHalp17,YungHalp18,Razieh19,YungHalp19,YungHalp23}. Moreover, negative or nonreal KD distributions can signal generalized contextuality (a strict notion of nonclassicality) \cite{Hofmann15, Lostaglio20, Ji2024quantitative, arvidssonshukur2024properties,thio2024} and violations of Leggett-Garg inequalities (temporal Bell inequalities) \cite{Jordan06,Ruskov06,williams2008weak,Emary_2014}.

In light of its diverse applications, a full characterization of KD-positive states is warranted. However, such a characterization is, as for the Wigner-positive states, nontrivial. The difficulty is exacerbated by the KD distribution's dependence on the two arbitrary bases  in its definition.  A partial characterization of the KD-positive pure states is obtained in Refs.~\cite{arvidsson-shukuretal2021, debievre2021, DeB23, Xu22, XU2024,yangetal2023a}, where KD positivity is linked to quantum uncertainty.  Moreover,  initial results on the KD positivity of mixed states  were obtained in specific instances~\cite{langrenez2023characterizing,yangetal2024,xu2024a,debievre2025kirkwooddiracrepresentationassociatedfourier,langrenezetal2024a}, 
 which highlight the intricacies of the structure of the set of KD-positive states: This set, which we shall denote by $\EcalKDC$, is not necessarily a polytope, and if it is, its extremal points may not be pure states,  making them hard to identify. (See Fig. \ref{fig:inclusions}.)

In this Article, we fully characterize the set of KD-positive states of almost all KD distributions. For almost all choices of the bases, we show that all KD-positive states are mixtures of the bases' states: They are of the simplest and minimal form.  (See Theorem~\ref{thm:principal}.)  Our result further implies that the set of observables, unitary transformations, and generalized measurements (defined below) with classical KD representations are also almost always minimal.  Moreover, the only unitaries that are KD-positivity preserving are (trivial) global phases. Finally, we discuss applications and implications of our results in relation to witness-construction for nonclassical phenomena,  classical simulations of quantum circuits, and foundations of quantum theory.

\begin{figure}
\begin{center}
\begin{tikzpicture}[scale=0.66]
	\draw (0,0) circle (2cm);
	\fill[pattern = crosshatch dots, pattern color=gray!50!white] (0,-2)--({sqrt(4-1.6*1.6)},-1.6)--({sqrt(4-1.55*1.55)},1.55)--(0,2)--({-sqrt(4-1.65*1.65)},1.65)--({-sqrt(4-1.65*1.65)},-1.65) --(0,-2);
	\draw[color=black, dashed](0,-2)--({sqrt(4-1.6*1.6)},-1.6)--({sqrt(4-1.55*1.55)},1.55)--(0,2)--({-sqrt(4-1.65*1.65)},1.65)--({-sqrt(4-1.65*1.65)},-1.65) --(0,-2);
	\node[scale=0.75] at (0,0) {$\convAB$};
    \fill (({-sqrt(4-1.65*1.65)},-1.65) circle (0.05); 
	\node[left,scale=0.75] at ({-sqrt(4-1.65*1.65)},-1.65){$\cket{a_1}\bra{a_1}$};
	\fill (0,-2) circle (0.05); 
	\node[below,scale=0.75] at  (0,-2){$\cket{a_2}\bra{a_2}$};
	\fill ({sqrt(4-1.6*1.6)},-1.6) circle (0.05); 
	\node[right,scale=0.75] at ({sqrt(4-1.6*1.6)},-1.6){$\cket{a_3}\bra{a_3}$};
	\fill ({-sqrt(4-1.65*1.65)},1.65) circle (0.05); 
	\node[left,scale=0.75] at ({-sqrt(4-1.65*1.65)},1.65){$\cket{b_1}\bra{b_1}$};
	\fill (0,2) circle (0.05); 
	\node[above,scale=0.75] at (0,2){$\cket{b_2}\bra{b_2}$};
	\fill ({sqrt(4-1.55*1.55)},1.55) circle (0.05); 
	\node[right,scale=0.75] at ({sqrt(4-1.55*1.55)},1.55){$\cket{b_3}\bra{b_3}$}; 
    \node[scale=0.75] at (-2,2.3){(a)}; 
\end{tikzpicture}
\begin{tikzpicture}[scale=0.66]
	\draw (0,0) circle (2cm);
	\fill[color=yellow!65!white] (0,-2)--({sqrt(4-1.6*1.6)},-1.6)-- (2,0) --({sqrt(4-1.55*1.55)},1.55)--(0,2)--({-sqrt(4-1.65*1.65)},1.65)--({-sqrt(4-1.65*1.65)},-1.65) --(0,-2);
    \fill[color=yellow!65!white] ({-sqrt(4-1.65*1.65)},1.65)-- (-1.43,1.08)--(-1.43,-1.08)--({-sqrt(4-1.65*1.65)},-1.65);
    \fill[pattern = north west lines, pattern color=black] ({-sqrt(4-1.65*1.65)},1.65)-- (-1.43,1.08)--(-1.43,-1.08)--({-sqrt(4-1.65*1.65)},-1.65);
    \filldraw[color=yellow!65!white] (-1.43,1.08) arc[start angle= 152.71, end angle=207.29,radius= 2.347 cm] ;
    \fill[ pattern = north west lines, pattern color=black] (-1.43,1.08) arc[start angle= 152.71, end angle=207.29,radius= 2.347 cm] ;
	\draw[color=black,dashed](0,-2)--({sqrt(4-1.6*1.6)},-1.6)-- (2,0) --({sqrt(4-1.55*1.55)},1.55)--(0,2)--({-sqrt(4-1.65*1.65)},1.65)-- (-1.43,1.08);
    \draw[color=black,dashed] (-1.43,-1.08)--({-sqrt(4-1.65*1.65)},-1.65) --(0,-2);
	\node[scale=0.75] at (0,0) {$\EcalKDC$};
    \draw[line width = 1] (-1.43,1.08) arc[start angle= 152.71, end angle=207.29,radius= 2.347 cm] ;
    \fill ({-sqrt(4-1.65*1.65)},-1.65) circle (0.05); 
	\node[left,scale=0.75] at ({-sqrt(4-1.65*1.65)},-1.65){$\cket{a_1}\bra{a_1}$};
	\fill (0,-2) circle (0.05); 
	\node[below,scale=0.75] at  (0,-2){$\cket{a_2}\bra{a_2}$};
	\fill ({sqrt(4-1.6*1.6)},-1.6) circle (0.05); 
	\node[right,scale=0.75] at ({sqrt(4-1.6*1.6)},-1.6){$\cket{a_3}\bra{a_3}$};
	\fill ({-sqrt(4-1.65*1.65)},1.65) circle (0.05); 
	\node[left,scale=0.75] at ({-sqrt(4-1.65*1.65)},1.65){$\cket{b_1}\bra{b_1}$};
	\fill (0,2) circle (0.05); 
	\node[above,scale=0.75] at (0,2){$\cket{b_2}\bra{b_2}$};
	\fill ({sqrt(4-1.55*1.55)},1.55) circle (0.05); 
	\node[right,scale=0.75] at ({sqrt(4-1.55*1.55)},1.55){$\cket{b_3}\bra{b_3}$}; 
	\node[left, scale=0.75] at  (-1.60,0){$C$};
	\fill (2,0) circle (0.05); 
	\node[right,scale=0.75] at  (2,0) {$D$};
    \node[scale=0.75] at (-2,2.3){(b)};
    \fill (-1.43,-1.08) circle (0.05);
    \fill (-1.43,1.08) circle (0.05); 
\end{tikzpicture}
    \caption{Schematic representation of  possible geometries of $\EcalKDC$, when $d=3$. The disc represents all states, with the boundary representing the pure states. (a) The simplest situation (Theorem \ref{thm:principal}): $\EcalKDC =\convAB$. (b) A more complex situation: $\convAB \subsetneq \conv{\EcalKDCpu } \subsetneq\EcalKDC$.  The point $D$ represents a pure state in  $\EcalKDC$. The arc $C$ represents a set of mixed extremal states in $\EcalKDC$. The hatched part represents the so-called \textit{exotic} states.}
    \label{fig:inclusions}
\end{center}
\end{figure}

\textit{The KD representation of quantum theory:---}
The KD distributions, as do other quasiprobability distributions, provide useful alternatives to the standard Schr\"odinger picture of quantum mechanics. If $\mab := \min |\braket{a_i}{b_j}| >0 $ (as we assume henceforth), the KD distribution is informationally complete~\cite{johansen2007}: Knowledge of $Q(\rho)$ uniquely determines $\rho$.  The marginals of  $Q(\rho)$ reproduce the Born rule:
\begin{eqnarray}
   & \nonumber \sum_j Q_{ij}(\rho)=\langle a_i|\rho|a_i\rangle, \quad \sum_i Q_{ij}(\rho)=\langle b_j|\rho|b_j\rangle \\
   &  \sum_{i,j} Q_{ij}(\rho)=1.
\end{eqnarray}
Therefore, $Q(\rho)$ behaves like a joint probability distribution for the observables $A$ and $B$, except that it may take complex values.  

To extend the KD formalism to a full representation of quantum mechanics, we introduce the KD symbol $W(F)$ of an operator $F$:
\begin{equation}
\label{eq:WtoQ}
    W_{ij}(F) = \frac{\matrixel{a_i}{F}{b_j}}{\innerproduct{a_i}{b_j}} = \frac{Q_{ij}(F)}{|\innerproduct{a_i}{b_j}|^2}.
\end{equation}
Here, we extended Eq.\eqref{eq:Qdef} by linearity to all operators $F$ on $\Hcal$.
Then, the expectation value $\Tr(F^{\dagger}\rho) $ of any operator $F$ with respect to any state $\rho$ is~\cite{langrenez2023characterizing,arvidssonshukur2024properties,schmid2024kirkwooddirac} 
\begin{equation}\label{eq:overlap}
\Tr(F^{\dagger}\rho)=\sum_{i,j} W_{ij}^*(F)Q_{ij}(\rho).
\end{equation}
Mathematically, each matrix element $W_{ij}(F)$ is a \textit{weak value} of $F$ \cite{Aharonov88}.
We call an operator KD real if its KD symbol is real. The real vector space of KD-real observables is denoted by $\VR$: $F\in\VR$ provided that $F=F^{\dagger}$ and that $W_{ij}(F)\in \R$ for all $i,j$. The set $\VR$ of KD-real observables is in general equally difficult to characterize as the set $\EcalKDC$ of KD-positive states.

A generalized measurement (also known as a positive-operator-valued measure) is described by a set $\{ M_k \}$, the elements of which are positive semi-definite  operators that satisfy   $\sum_k M_k = I_d $.
We denote by $M_{\textrm{KD}+}$ the space of measurement elements $M$ with  $W_{ij}(M)\geq 0$.

Together, the KD distribution $Q$ and the KD symbol $W$  provide a quasiprobability representation of quantum mechanics. Their restrictions to $\EcalKDC$, respectively $M_{\textrm{KD}+},$ provide a proper probability representation of a (classically describable) ``fragment’’ of quantum mechanics. The states of this fragment belong to $\EcalKDC$ and its observables are modeled with generalized measurements with elements in $M_{\textrm{KD} +}$.

\textit{Main result:---}{We now explore further the set of  KD-positive states and KD-real observables. Given two sets of basis projectors $\Acal$ and  $\Bcal$, one can readily confirm that
\begin{equation}\label{eq:inclusions}
\convAB\subseteq \EcalKDC\ \mathrm{and}\ \spanR (\Acal\cup \Bcal)\subseteq \VR.
\end{equation}
Here, $\conv{Z}$ denotes all convex combinations of the elements in $Z$, $\spanR(Z)$ denotes the real vector space generated by $Z$, and $\cup$ denotes the union of two sets. Our main result is:

\begin{Theorem}\label{thm:principal}
Let $\Acal=\{|a_i\rangle\langle a_i| \}$ and  $\Bcal=\{|b_j\rangle\langle b_j| \}$ be randomly chosen. Then, with probability $1$, the sets of KD-positive states, of KD-real observables, of KD-positive measurement elements and of KD-positivity-preserving unitaries $T$ (in $d >2$) are all of minimal size:
\begin{eqnarray}
\label{eq:goal}  \EcalKDC &=& \conv{\Acal\cup\Bcal}, \\ 
\label{eq:goal2} \VR &=& \spanRAB, \\ 
\label{eq:goal3} M_{\mathrm{KD} +} & \subset & \mathrm{span}_{\R^{+}}(\Acal\cup\Bcal), \\
\label{eq:goal4} T& = & \exp(i\alpha)\bbone, \alpha\in\R .
\end{eqnarray}
\end{Theorem}

Equation~\eqref{eq:goal} asserts that a state $\rho$ is KD positive if, and only if, there exist $0\leq \lambda_i, \mu_j\leq 1$ with $\sum_i\lambda_i+\sum_j\mu_j=1$, such that 
 \begin{equation}
     \rho=\sum_{i,j}\lambda_i |a_i\rangle\langle a_i| +\mu_j |b_j\rangle\langle b_j|.
 \end{equation}
Thus,  $\EcalKDC$ is 
a $2(d-1)$-dimensional polytope with  $\{ |a_i\rangle\langle a_i| \}$ and $\{ |b_j\rangle\langle b_j| \}$ as its $2d$  known vertices. $\EcalKDC$ cannot be ``smaller'' as any projector in $\Acal$ or in $\Bcal$ is KD positive by construction. 
Moreover, Eq.\eqref{eq:goal2} asserts that, with probability $1$, an observable is KD real if, and only if, there exist $\lambda_i,\mu_j\in\R$ such that 
  \begin{equation}
     F=\sum_{i,j}\lambda_i |a_i\rangle\langle a_i| +\mu_j |b_j\rangle\langle b_j|.
 \end{equation}
 Similarly, Eq.\eqref{eq:goal3} asserts that any measurement element $M \in M_{\textrm{KD}+}$ is of the form
  \begin{equation}
     M =\sum_{i,j}\lambda_i |a_i\rangle\langle a_i| +\mu_j |b_j\rangle\langle b_j|,
 \end{equation}
with $0 \leq \lambda_i, \mu_j \leq 1$. 
 Finally, Eq. \eqref{eq:goal4} asserts that 
 only trivial unitaries $T$  transform any KD-positive state $\rho$ into a KD-positive state $T\rho T^\dagger$.

 Before proving Theorem \ref{thm:principal}, we explain the precise meaning of the expression ``randomly chosen''. 
Whereas both sides of Eqs.~\eqref{eq:goal}--\eqref{eq:goal3} depend on the sets of basis projectors $\Acal$ and $\Bcal$, the relations themselves depend only on their transition matrix $U$ (with elements $U_{ij}=\langle a_i|b_j\rangle$), not on the individual basis elements. In other words, if $\Acal'$ and $ \Bcal'$ are such that $U'=U$, then, if Eqs.~\eqref{eq:goal}--\eqref{eq:goal3} hold for $\Acal$ and $\Bcal$, they  also hold for $\Acal'$ and $\Bcal'$~\cite{DeB23}. The unitary group $\Ucal(d)$ carries a unique invariant probability measure, called the Haar measure, that we denote by $\mu_{\textrm H}$~\cite{simon1996representations}. Theorem \ref{thm:principal} asserts that there is a probability-$1$ subset of transition matrices for which Eqs.~\eqref{eq:goal}--\eqref{eq:goal3} hold. 

We now turn to the proof of the theorem. We denote by $\Omega$ the set of all unitary matrices $U$ for which $\min 
 \left|U_{ij}\right|>0$; these are the unitary matrices that contain no zeroes.  The set $\Omega$ is open and $\mu_{\textrm H}(\Omega)=1$. In other words, $\Omega$ is a probability-$1$ subset of the unitary group. The latter statement follows directly from Proposition~\ref{prop:zero_sets_analytical} below.
 It was established in~\cite{langrenez2023characterizing} that, when $U\in\Omega$, $\dim(\spanRAB)=2d-1$ and
\begin{equation}
\EcalKDC=\conv{\Acal\cup\Bcal} \Leftrightarrow\VR=\spanRAB.
\end{equation}
In view of Eq.\eqref{eq:inclusions}, the theorem thus follows once we show that, with probability $1$, $\VR\subseteq \spanRAB$ or, equivalently, that $\dim\VR=2d-1$. 

\begin{figure}
    \centering
\begin{tikzpicture}[scale=0.53, transform shape]
    \node[black] at (0,3.4) {\Large Operators};
    \node at (0.1,-3.5) {\Large $\VR \stackrel{?}{=} \spanRAB$};
    
    \draw[black,thick] (0,0) circle (3);
    \node at (8,2.5) {\Large $\mcl{M}_{d}(\mathbb{C})$};
    \node at (0,2.5) {\Large $\mcl{S}(\H)$};

    \draw[prettyblue, dashed, thick] (0,0) ellipse (2.8 and 2);
    \node[prettyblue, dash pattern=on 3pt off 2pt on 1pt off 2pt] at (-1.3, 1) {\Large$\VR$};

    \draw[prettyred, densely dotted, thick] (0.7-0.4,-0.5) ellipse (2.2 and 1);
    \node[prettyred] at (0.7-0.4, -0.5) {\Large$\spanRAB$};

    \node[black] at (8,3.4) {\Large Weak-value matrices};
    \node at (7.7,-3.5) {\Large$W(\VR) = \ker C^U \stackrel{?}{=} \mathrm{span}_{\R}(\{X_k, Y_l\})$};
    
    \draw[black,thick] (8,0) circle (3);

    \draw[prettyblue, dashed, thick] (8,0) ellipse (2.8 and 2);
    \node[prettyblue] at (8, 1) {\Large$W(\VR) = \ker C^U$};

    \draw[prettyred, densely dotted, thick] (8.7-0.4,-0.5) ellipse (2.2 and 1);
    \node[prettyred] at (8.7-0.4, -0.5) {\Large$\mathrm{span}_{\R}(\{X_k, Y_l\})$};

    \draw[-stealth,black, thick] (3 + 0.2 ,0) -- (5 - 0.2,0);
    \node[black] at (4,0.3) {\Large$W$};

\end{tikzpicture}
    \caption{ The two circles represent the set of all operators and the corresponding set of KD symbols. Proving the Theorem reduces to showing that, with probability $1$, the red (dotted) and the blue (striped) sets coincide. The blue and red sets on the right have simpler mathematical descriptions than they do on the left, facilitating the proof of the Theorem.}
    \label{fig:schematic-overview}
\end{figure}
We start by noting that 
 \begin{equation}\label{eq:VKDr}
    \VR = \Ker(\Im W) \cap \SAH
\end{equation}
where $\cap$ denotes the intersection of two sets,  $\SAH$ denotes the real vector space of self-adjoint operators on $\mathcal H$, and $\Im{W} : F \in\mathcal{L}(\H) \to \Im{(W_{i,j}(F))}_{(i,j)\in\IntEnt{1}{d}^2}\in \C^{(d\times d)}$ denotes the imaginary components of the map $W$.
Indeed, an operator $F$ is KD real whenever Im$W(F)=0$, \emph{i.e.} $F\in \Ker(\Im{W})$.

As $Q$ is informationally complete (bijective), Eq.\eqref{eq:WtoQ} implies that the same holds for $W$.
That is, given an arbitrary matrix $w\in\MatC{d}$, there is a unique operator 
$F\in\mcl{L}(\H)$ such that 
\begin{equation}\label{eqn:op_from_wv}
    \matrixel{a_i}{F}{b_j} = w_{ij}\innerproduct{a_i}{b_j}.
\end{equation}
Hence
\begin{equation}
\dim \VR= \dim W(\VR).
\end{equation}
In what follows, we will show that $\dim W(\VR)=2d-1$ with probability $1$, thereby concluding the proof. We give a schematic overview of our proof strategy in Fig.~\ref{fig:schematic-overview}.

We proceed by showing how to characterize the real vector space $W(\VR)$ as a subspace of $\MatR{d}$, the space of $d \times d$ real matrices.  For that purpose, we first identify $W(\SAH)$. A direct computation, using the definition of $W$, shows that an operator $F$ is self-adjoint if, and only if, for all $i,j$,
\begin{equation}\label{eqn:self_adj_condition}
    \sum_l (W_{il}(F)-W_{jl}(F)^{*})U_{jl}^*U_{il}=0.
\end{equation}
It follows that $w\in W(\VR)$ if, and only if, $w\in\MatR{d}$ and, for all $i,j$, 
\begin{equation}
    \sum_l (w_{il}-w_{jl})U_{jl}^*U_{il}=0.
\end{equation}
This observation motivates us to define, for each unitary operator $U$,  the real linear map $C^U : \MatR{d}\to \SAO$, where
\begin{equation}\label{eqn:C_def}
    C^U_{jk}(w) = i\sum_l (w_{jl}-w_{kl})U_{kl}^*U_{jl}.
\end{equation}
One can therefore conclude that
\begin{equation}
W(\VR)=\Ker(C^U) .
\end{equation}
This implies that the two real vector spaces $\VR$ and $\Ker(C^U)$ are isomorphic and hence have the same dimension.  Additionally, in view of Eq.\eqref{eq:inclusions},
\begin{equation}
    W(\spanRAB)\subseteq W(\VR)=\Ker(C^U) .
\end{equation}
Hence, $\dim(\Ker(C^U))\geq 2d-1$. In other words, given $U$, the linear algebra question we need to address is: ``Is the dimension of $\Ker(C^U)$ equal to $2d-1$?''

We observe that 
\begin{equation}
    W(\spanRAB)\!=\!\spanR\{X_k, Y_\ell \},
\end{equation}
where $X_k=W(|a_k\rangle\langle a_k|)$ and $Y_k=W(|b_k\rangle\langle b_k|)$ are $d \times d$ matrices given by
\begin{eqnarray}
    X_k &=& \begin{pmatrix}
        0       &\cdots   &0\\
        \vdots  &\ddots  &\vdots\\
        0       &\cdots   &0\\
        1       &\cdots   &1\\
        0       &\dots   &0\\
        \vdots  &\ddots  &\vdots\\
        0       &\cdots   &0\\
    \end{pmatrix},
    \\
    Y_k &= &\begin{pmatrix}
        0       &\cdots  &0      &1      &0      &\cdots  &0\\
        \vdots  &\ddots &\vdots &\vdots &\vdots &\ddots &\vdots\\
        0       &\cdots  &0      &1      &0      &\cdots  &0\\
    \end{pmatrix}.
\end{eqnarray}
Here, $X_k$ is a matrix with ones on its $k$th row and zeroes elsewhere, and $Y_k=X_k^T$.
While both $\VR$ and the map $W$ depend on the two set of basis projectors $\Acal$ and $\Bcal$, the real vector space $W(\VR)=\Ker(C^U)$ depends only on their transition matrix $U$, not on the individual bases. In addition, $W(\spanRAB)$ depends neither on the sets of basis projectors $\Acal$ and $\Bcal$, nor even on the transition matrix $U$. This constitutes a considerable simplification that we will exploit to determine the dimension of $\Ker (C^U)$. 

For any $U$, we define
\begin{equation}
    \Triv:=\spanR\{X_k, Y_\ell \}\subseteq \Ker (C^U) .
\end{equation}
$\Triv$ is  well defined even if $U$ does have zeroes since $C^{U}$ in Eq. \eqref{eqn:C_def} is always well defined.
Let $\NT=\Triv^\perp$ denote the orthogonal complement of $\Triv$ in $\MatR{d}$ with respect to the Hilbert-Schmidt inner product on \MatC{d} given by $\inpr{A}{B}=\Tr (A^\dag B)$. Then, $\dim{\Ncal}=d^2 - (2d-1) = (d-1)^2$ and
\begin{equation}
    C^U:\MatR{d}=\Triv\oplus\NT\to \Ran{C^U}\subseteq\SAO .
\end{equation}
Here, $\Ran{Z}$ denotes the range of $Z$.
 Since $\Triv$ does not depend on $U$, the subspace  $\NT\subsetneq \MatR{d}$  is independent of $U$ as well.
We then define the set of ``good'' unitaries
\begin{equation}\label{eq:defGamma}
    \Gamma=\{U\in \Ucal(d)\mid \Ker(C^U)=\mathcal T\}.
\end{equation}
This terminology is justified by the fact that $\dim \Triv=2d-1$.
Thus, the proof of Eqs.~\eqref{eq:goal}--\eqref{eq:goal3} of Theorem~\ref{thm:principal} is reduced to the proof of the following proposition.
\begin{Prop}\label{prop:KerCU}
    The set $\Gamma$ of unitary matrices $U$ for which $\Ker(C^U)=\Triv$ is a set of probability $1$:
    \begin{equation}
    \mu_{\textrm H}(\Gamma)=1.
    \end{equation}
\end{Prop}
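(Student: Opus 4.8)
The plan is to reduce the statement to the existence of a single ``good'' unitary and then to propagate that single point to a full-measure set by analyticity.

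First I would record the two facts that make this reduction possible. On one hand, $\Triv\subseteq\Ker C^U$ for every $U$, as already observed; on the other hand, $\dim\Triv=2d-1$. Indeed, the $d$ matrices $A_k$ are linearly independent (they are supported on distinct rows), and so are the $B_\ell$ (distinct columns), while $\spanR\{A_k\}\cap\spanR\{B_\ell\}$ is the line spanned by the all-ones matrix $J=\sum_k A_k=\sum_\ell B_\ell$; hence $\dim\Triv=d+d-1=2d-1$ and $\dim\NT=(d-1)^2$. Consequently $U\in\Gamma$ if and only if the restriction $C^U|_{\NT}\colon\NT\to\SAO$ is injective, equivalently if and only if $C^U$ attains its maximal possible rank $(d-1)^2$.

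Second, I would encode this injectivity as the non-vanishing of a single real-analytic function on $\Ucal(d)$. Fix, once and for all and independently of $U$, an orthonormal basis of $\NT$ and an orthonormal basis of $\SAO$, and let $M(U)$ be the real matrix of $C^U|_{\NT}$ in these bases. From the defining formula $C^U_{jk}(w)=i\sum_l(w_{jl}-w_{kl})U_{kl}^*U_{jl}$, each entry of $M(U)$ is a polynomial of degree $2$ in the real and imaginary parts of the entries of $U$; in particular the Gram determinant $g(U):=\det\!\big(M(U)^{\mathsf T}M(U)\big)$ is real-analytic on the connected manifold $\Ucal(d)$, and $g(U)\neq 0$ exactly when $C^U|_{\NT}$ is injective, i.e. exactly when $U\in\Gamma$. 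By Proposition~\ref{prop:zero_sets_analytical}, a real-analytic function on $\Ucal(d)$ either vanishes identically or vanishes only on a set of Haar measure zero. Thus $\mu_{\textrm H}(\Gamma)\in\{0,1\}$, with $\mu_{\textrm H}(\Gamma)=1$ as soon as $g\not\equiv 0$, i.e. as soon as there exists one unitary $\Uex$ with $\Ker C^{\Uex}=\Triv$.

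The whole difficulty therefore collapses to producing a single witness, and this is the step I expect to be the main obstacle. My first attempt would be the Fourier matrix $U_{jk}=\tfrac{1}{\sqrt d}\omega^{jk}$ with $\omega=e^{2\pi i/d}$, for which $U_{kl}^*U_{jl}=\tfrac1d\omega^{(j-k)l}$, so that $w\in\Ker C^U$ means $\sum_l w_{jl}\omega^{ml}=\sum_l w_{kl}\omega^{ml}$ for every pair $(j,k)$ with $m:=j-k$. Writing $\tilde w_j(m)=\sum_l w_{jl}\omega^{ml}$ for the discrete Fourier transform of the $j$-th row, the kernel condition reads $\tilde w_j(m)=\tilde w_{j-m}(m)$ for all $j$ and all $m$. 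When $d$ is prime, every $m\neq 0$ is invertible modulo $d$, which forces $\tilde w_j(m)$ to be independent of $j$ for each fixed $m\neq 0$; hence all rows of $w$ coincide up to an additive multiple of the all-ones row, i.e. $w_{jl}=v_l+c_j$, which is precisely the general element of $\Triv$. So the Fourier matrix settles the prime case. For composite $d$ the Fourier matrix fails, since when $\gcd(m,d)>1$ the relation $\tilde w_j(m)=\tilde w_{j-m}(m)$ only forces periodicity in $j$ and enlarges the kernel; the genuine work is to supply a witness valid for every $d$. Here I would either perturb the Fourier point---studying $g$ near it and showing that a generic small unitary perturbation restores full rank, which only requires $g\not\equiv 0$ rather than an exact root---or construct an explicit $d$-dependent unitary (for instance a suitably generic circulant $U_{jk}=f(j-k)$) and verify $\Ker C^{U}=\Triv$ directly. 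Either way, once one such $\Uex$ is exhibited, the analyticity dichotomy of the previous paragraph yields $\mu_{\textrm H}(\Gamma)=1$, and combined with Lemma~\ref{lem:linearalgbis} this proves Theorem~\ref{thm:principal}.
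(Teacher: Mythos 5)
Your overall strategy coincides with the paper's: reduce membership in $\Gamma$ to injectivity of $C^U|_{\NT}$, encode that injectivity as the non-vanishing of a single polynomial (hence real-analytic) function on $\Ucal(d)$, and invoke Proposition~\ref{prop:zero_sets_analytical} to upgrade one witness to a full-measure set. Your scalarization via the Gram determinant $g(U)=\det\left(M(U)^{\mathsf T}M(U)\right)$ is sound and in fact slightly cleaner than the paper's: the paper first fixes a witness $\Uex$, sets $\mathcal R_\star=\Ran{C^{\Uex}}$, and takes the determinant of $\Pi_\star C^U|_{\NT}:\NT\to\mathcal R_\star$, a construction that already requires $\Uex$ to be defined; your $g$ requires no such choice. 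Your dimension count ($\dim\Triv=2d-1$, $\dim\NT=(d-1)^2$, intersection of the two spans equal to the line through the all-ones matrix) is also correct.

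The gap is precisely at the step you flag as the main obstacle: exhibiting a witness $\Uex$ with $\Ker C^{\Uex}=\Triv$ for \emph{every} $d$, which is the paper's Lemma~\ref{lem:Gammanotempty} and is not optional --- without it the dichotomy only yields $\mu_{\textrm H}(\Gamma)\in\{0,1\}$. Your Fourier-matrix computation is correct and settles prime $d$, and your diagnosis of failure for composite $d$ is also correct (for $d=4$ the frequency $m=2$ conditions only force period-two dependence on the row index, giving a kernel of dimension $8>7$). But for composite $d$ you offer two unexecuted strategies, and the first is not an argument as stated: to know that a ``generic small perturbation restores full rank'' you must already know that $g\not\equiv 0$ near the Fourier point, which is exactly what is to be proved; one would instead have to show some Taylor coefficient of $g$ there is nonzero, a computation you do not attempt. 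Your second suggestion is the right one, and, amusingly, the paper's witness lies in exactly the family you name: $\Uex=\exp(id^{-1}J_d)=\id+zJ_d$ with $z=d^{-1}(e^{i}-1)$ is a circulant unitary. For $N\in\NT$ the vanishing row sums collapse $-iC^{\Uex}_{jk}(N)$ to $(N_{jj}-N_{kj})z^*+(N_{jk}-N_{kk})z$; since $\Re{(z)}\neq 0\neq\Im{(z)}$, separating real and imaginary parts forces $N_{jk}=N_{kk}$ for all $j,k$, and the vanishing column sums then give $N_{kk}=0$, hence $N=0$. Until a witness valid for all $d$ is supplied along these lines, your proof is complete only in prime dimension.
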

\noindent{\bf Proof of Proposition~\ref{prop:KerCU}.}
It follows from the rank-nullity theorem that $U\in\Gamma$ if, and only if,
\begin{equation}
C^U|_{\NT}:\NT\to \SAO
\end{equation}
is injective. We now show that this is indeed the case with probability $1$. 

Our proof relies on the existence of two sets of basis projectors $\Acal_{\star}$ and $\Bcal_{\star}$ for which $U_{\star} \in \Gamma$. As follows from a simple calculation in the Appendix~\ref{sec:App_Uex}, in  any dimension $d$, this property is satisfied by
\begin{equation}\label{eq:Ustar}
\Uex = \mathrm{exp}(id^{-1}J_{d}) = \id + zJ_{d}.
\end{equation}
Here,  $z= d^{-1}\left(e^{i}-1\right)$, $\id$ is the identity matrix, and $J_d$ is the matrix of ones.
We define 
\begin{equation}
\mathcal R_\star:=\Ran{C^{\Uex}}\subseteq\SAO.
\end{equation}
Since $\Uex\in\Gamma$, we know that $\dim(\mathcal R_\star)=\dim(\Ncal) = (d-1)^2$.
Let $\Pi_\star$ be the orthogonal projector onto $\mathcal R_\star$:
\begin{equation}
\Pi_\star:\SAO\to \mathcal R_\star\subsetneq \SAO, \quad \Pi_\star^2=\Pi_\star.
\end{equation} 
For every $U\in \Ucal(d)$ we then define
\begin{equation}
    \rcu := \Pi_\star  C^U|_{\NT} :\NT \to \mathcal R_\star .
\end{equation}
If $\rcu$ is a bijection, then $C^U|_{\NT}$ is an injection which in turn implies that $U\in\Gamma$. Moreover, by construction, $\rcu[\Uex]$ is a bijection. We now show that the unitaries $U\in\Ucal(d)$ for which $\tilde C^U$ is a bijection form a set  of probability $1$. This will conclude the proof. Consider two bases, one in $\Ncal$ and one in $\mathcal R_\star$. Because neither $\Ncal$, nor $\mathcal R_\star$ depend on $U$ these bases can be chosen independently of $U$.  We then denote the determinant of $\tilde C^U$ with respect to this choice of bases by $\det(\tilde C^U)$. Since the matrix elements of $C^U$ are quadratic polynomials in the real and imaginary parts of the matrix elements of $U$, it follows that the map
\begin{equation}
    c:\Ucal(d) \to \R, \; U\mapsto \det(\rcu)
\end{equation}
is also a polynomial in the real and imaginary parts of the entries of $U$. 
Since $c(U_\star) \neq 0$, the result now follows from Proposition~\ref{prop:zero_sets_analytical} below.
\qed

\begin{Prop}\label{prop:zero_sets_analytical}
    Let $f: \Ucal(d)\to \R$ be a polynomial in the real and imaginary matrix elements of $U$. Suppose that there exists at least one unitary matrix $\Uex$ for which $f(\Uex)\not=0$; then the zero-set of $f$ has vanishing Haar measure.
\end{Prop}
We provide a proof in Appendix~\ref{sec:Zeroes} (which contains references~\cite{hall2003,varadarajan1984,spivak1965,krantzparks2013,mityagin2020a,Rudin1987,Alfsen1963,abrahammarsden1978} therein). The proof relies on the fact that the zero sets of real analytic functions on $\R^m$ have vanishing Lebesgue measure. Finally, our proof yields as a corollary Eq.\eqref{eq:goal4}.

\textit{Discussion:---}We have shown that the convex set of KD-positive states, $\EcalKDC$, is, with probability $1$, simple and of minimal size. That is, $\EcalKDC$ is almost always the polytope of all mixtures of the states in $\Acal$ and $\Bcal$, of which the supporting hyperplanes are known \cite{langrenez2023characterizing}. 
Also the space of KD-real observables, the set of KD-positivity-preserving unitary transformations, and the set of KD-positive measurement elements are almost always minimal. Below, we discuss four applications of our results, with details in Appendices~\ref{sec:Qubits},~\ref{appendix:witness} and~\ref{appendix:simmulability}.

First, consider the task of preparing a state $\rho$ with the purpose of   generating nonclassical phenomena in any of the numerous fields where KD nonpositivity signals quantum advantages.  Naturally, one would want $\rho$ to be far from $\EcalKDC$, i.e., far from the shaded areas in Fig. \ref{fig:inclusions}.  
In general, it can be a formidable task to characterize  the region outside $\EcalKDC$. For example, the  KD distribution naturally associated to $n$ qubits ($d=2^n$) has $\mathcal{O}(d^{\log_{2}(d)})$ KD-positive pure extremal states and additional elusive mixed extremal states. Then $\EcalKDC$ is not necessarily a polytope and determining the distance of $\rho$ to this set is often unfeasible. However, our results show that almost all KD  distributions have only $2d$ explicitly known positive extremal states. For such KD distributions, one can lower bound the distance of any given $\rho$ to $\EcalKDC$ through the known witnesses associated with the facets of the simple polytope $\EcalKDC=\conv{\Acal\cup\Bcal}$.

Second, our theorem has applications to the foundational concept of generalized contextuality \cite{Spekkens2005}.  Generalizing Bell nonlocality, generalized contextuality describes the nonclassical phenomenon wherein experimentally indistinguishable procedures cannot be represented identically on a hidden-variable level. Prior works have established that experiments measuring the weak values in Eq. \eqref{eq:WtoQ} are noncontextual (classical) if, and only if, the input state $\rho$ is KD-positive \cite{Pusey14, kunjwal2019, thio2024}. Applying Theorem \ref{thm:principal}, we find that the set of input states that yield classical behavior is minimal and well-known for almost all observables $A$ and $B$.

Third, certain forms of nonclassical phenomena \cite{thio2024} stem from
mixed \textit{exotic} states: states that have a positive KD distribution but cannot be decomposed in terms of KD-positive pure states. [See Fig.~\ref{fig:inclusions}(b).] Our results show that such states can exist only in rare and  tailored KD distributions, such as those associated to certain spin-1 \cite{langrenez2023characterizing}  or $n$-qubit systems.

Fourth, Theorem~\ref{thm:principal}  has implications for the prospect of employing the KD distribution for efficient classical simulation of amenable quantum circuits. Quasiprobability distributions can be used to simulate quantum circuits using Monte Carlo sampling \cite{PhysRevLett.115.070501}. The runtime of such techniques scales polynomially with the accuracy but exponentially with the total KD nonpositivity of the circuit's input state, unitary gates, and measurements. 
 Our results immediately imply that, with probability 1, the only KD-positive circuit is an initialization in $\convAB$ followed immediately by a measurement in $M_{\textrm{KD}+}$. Consequently, only rare KD distributions, such as  the KD distribution in Appendix~\ref{sec:Qubits} \cite{x819-898d}, can be expected to be of use in standard protocols for efficient quantum-circuit simulation. 

To summarize, our results prove that the characterization of nonclassical features connected with KD distributions is almost always vastly simpler than it can be. Our results also show that rare KD distributions must be tailored to unveil certain aspects of quantum theory. Such KD tailoring will benefit from further studies of the map $C^U$.

\medskip

\textit{Acknowledgements:---} SDB and CL acknowledge the support of the CDP C2EMPI, as well as the French State under the France-2030 programme, the University of Lille, the Initiative of Excellence of the University of Lille, the European Metropolis of Lille for their funding and support of the R-CDP-24-004-C2EMPI project. SDB and CL also acknowledge the support of  the CNRS through the MITI interdisciplinary programs. WS was supported by the EPRSC and Hitachi. JJT was supported by the Cambridge Trust. CKL was supported by Hitachi. SDB thanks Girton College, where part of this work was performed, for its hospitality. The authors thank L. Flaminio, D. Radchenko and N. Yunger Halpern for helpful discussions.

\medskip

\textit{All authors contributed equally. The author ordering was randomized.}

\bibliography{references}

@book{Serafini,
	author = {Serafini, Alessio},
	title = {Quantum Continuous Variables: A Primer of Theoretical Methods (1st ed.)},
	publisher = {CRC Press.},
	address = {Boca Raton},
	year = {2017},
	url={https://doi.org/10.1201/9781315118727},
	doi={10.1201/9781315118727}
}

@article{schmid2024kirkwooddirac,
  title = {Kirkwood-{{Dirac}} Representations beyond Quantum States and Their Relation to Noncontextuality},
  author = {Schmid, David and Baldijão, Roberto D. and Yīng, Yìlè and Wagner, Rafael and Selby, John H.},
  year = {2024},
  journal = {Physical Review A},
  shortjournal = {Phys. Rev. A},
  volume = {110},
  number = {5},
  pages = {052206},
  publisher = {American Physical Society},
  doi = {10.1103/PhysRevA.110.052206}
}

@article{Schmid_2024,
   title={A structure theorem for generalized-noncontextual ontological models},
   volume={8},
   ISSN={2521-327X},
   url={http://dx.doi.org/10.22331/q-2024-03-14-1283},
   DOI={10.22331/q-2024-03-14-1283},
   journal={Quantum},
   publisher={Verein zur Forderung des Open Access Publizierens in den Quantenwissenschaften},
   author={Schmid, David and Selby, John H. and Pusey, Matthew F. and Spekkens, Robert W.},
   year={2024},
   month=mar, pages={1283} }

@article{Pusey14,
  title = {Anomalous Weak Values Are Proofs of Contextuality},
  author = {Pusey, Matthew F.},
  journal = {Phys. Rev. Lett.},
  volume = {113},
  issue = {20},
  pages = {200401},
  numpages = {5},
  year = {2014},
  month = {11},
  publisher = {American Physical Society},
  doi = {10.1103/PhysRevLett.113.200401},
  url = {https://link.aps.org/doi/10.1103/PhysRevLett.113.200401}
}

@article{Levy20,
  title = {Quasiprobability Distribution for Heat Fluctuations in the Quantum Regime},
  author = {Levy, Amikam and Lostaglio, Matteo},
  journal = {PRX Quantum},
  volume = {1},
  issue = {1},
  pages = {010309},
  numpages = {19},
  year = {2020},
  month = {09},
  publisher = {American Physical Society},
  doi = {10.1103/PRXQuantum.1.010309},
  url = {https://link.aps.org/doi/10.1103/PRXQuantum.1.010309}
}

@article{Emary_2014,
doi = {10.1088/0034-4885/77/3/039501},
url = {https://dx.doi.org/10.1088/0034-4885/77/3/039501},
year = {2014},
month = {02},
publisher = {},
volume = {77},
number = {3},
pages = {039501},
author = {Clive Emary and Neill Lambert and Franco Nori},
title = {Corrigendum: {Leggett–Garg} inequalities (2014 Rep. Prog. Phys. 77 016001)},
journal = {Reports on Progress in Physics},
abstract = {The full text of this article is available in the PDF provided.}
}

@article{Ji2024quantitative,
  doi = {10.22331/q-2024-02-14-1255},
  url = {https://doi.org/10.22331/q-2024-02-14-1255},
  title = {Quantitative relations between different measurement contexts},
  author = {Ji, Ming and Hofmann, Holger F.},
  journal = {{Quantum}},
  issn = {2521-327X},
  publisher = {{Verein zur F{\"{o}}rderung des Open Access Publizierens in den Quantenwissenschaften}},
  volume = {8},
  pages = {1255},
  month = feb,
  year = {2024}
}

@article{Hofmann15,
  title = {Quantum paradoxes originating from the nonclassical statistics of physical properties related to each other by half-periodic transformations},
  author = {Hofmann, Holger F.},
  journal = {Phys. Rev. A},
  volume = {91},
  issue = {6},
  pages = {062123},
  numpages = {11},
  year = {2015},
  month = {06},
  publisher = {American Physical Society},
  doi = {10.1103/PhysRevA.91.062123},
  url = {https://link.aps.org/doi/10.1103/PhysRevA.91.062123}
}

@article{Ruskov06,
  title = {Signatures of Quantum Behavior in Single-Qubit Weak Measurements},
  author = {Ruskov, Rusko and Korotkov, Alexander N. and Mizel, Ari},
  journal = {Phys. Rev. Lett.},
  volume = {96},
  issue = {20},
  pages = {200404},
  numpages = {4},
  year = {2006},
  month = {05},
  publisher = {American Physical Society},
  doi = {10.1103/PhysRevLett.96.200404},
  url = {https://link.aps.org/doi/10.1103/PhysRevLett.96.200404}
}

@article{Jordan06,
  title = {{Leggett-Garg Inequality with a Kicked Quantum Pump}},
  author = {Jordan, Andrew N. and Korotkov, Alexander N. and B\"uttiker, Markus},
  journal = {Phys. Rev. Lett.},
  volume = {97},
  issue = {2},
  pages = {026805},
  numpages = {4},
  year = {2006},
  month = {07},
  publisher = {American Physical Society},
  doi = {10.1103/PhysRevLett.97.026805},
  url = {https://link.aps.org/doi/10.1103/PhysRevLett.97.026805}
}

@book{leonhardt, 
	place={Cambridge}, 
	title={Essential Quantum Optics: From Quantum Measurements to Black Holes}, DOI={10.1017/CBO9780511806117}, 
	publisher={Cambridge University Press}, 
	author={Leonhardt, Ulf}, 
    note = {p.67},
	year={2010}
}

@article{Genoni,
	title = {Measure of the non-Gaussian character of a quantum state},
	author = {Genoni, Marco G. and Paris, Matteo G. A. and Banaszek, Konrad},
	journal = {Phys. Rev. A},
	volume = {76},
	issue = {4},
	pages = {042327},
	numpages = {6},
	year = {2007},
	month = {10},
	publisher = {American Physical Society},
	doi = {10.1103/PhysRevA.76.042327},
	url = {https://link.aps.org/doi/10.1103/PhysRevA.76.042327}
}

@article{hertz_decoherence_2023,
	title = {Decoherence and nonclassicality of photon-added and photon-subtracted multimode {Gaussian} states},
	volume = {107},
	issn = {2469-9926, 2469-9934},
	url = {https://link.aps.org/doi/10.1103/PhysRevA.107.043713},
	doi = {10.1103/PhysRevA.107.043713},
	number = {4},
	urldate = {2023-05-05},
	journal = {Physical Review A},
	author = {Hertz, A. and De Bièvre, S.},
	year = {2023},
	pages = {043713},
}

@article{Xu22,
	title = {Classification of incompatibility for two orthonormal bases},
	volume = {106},
	url = {https://link.aps.org/doi/10.1103/PhysRevA.106.022217},
	doi = {10.1103/PhysRevA.106.022217},
	number = {2},
	journal = {Physical Review A: Atomic, Molecular, and Optical Physics},
	author = {Xu, Jianwei},
	month = 8,
	year = {2022},
	pages = {022217},
}

@article{YungHalp17,
  title = {Jarzynski-like equality for the out-of-time-ordered correlator},
  author = {Yunger Halpern, Nicole},
  journal = {Phys. Rev. A},
  volume = {95},
  issue = {1},
  pages = {012120},
  numpages = {9},
  year = {2017},
  month = {01},
  publisher = {American Physical Society},
  doi = {10.1103/PhysRevA.95.012120},
  url = {https://link.aps.org/doi/10.1103/PhysRevA.95.012120}
}

@article{YungHalp18,
  title = {Quasiprobability behind the out-of-time-ordered correlator},
  author = {Yunger Halpern, Nicole and Swingle, Brian and Dressel, Justin},
  journal = {Phys. Rev. A},
  volume = {97},
  issue = {4},
  pages = {042105},
  numpages = {37},
  year = {2018},
  month = {04},
  publisher = {American Physical Society},
  doi = {10.1103/PhysRevA.97.042105},
  url = {https://link.aps.org/doi/10.1103/PhysRevA.97.042105}
}

@article{YungHalp23,
  title = {Critical phase and spin sharpening in {SU(2)}-symmetric monitored quantum circuits},
  author = {Majidy, Shayan and Agrawal, Utkarsh and Gopalakrishnan, Sarang and Potter, Andrew C. and Vasseur, Romain and Halpern, Nicole Yunger},
  journal = {Phys. Rev. B},
  volume = {108},
  issue = {5},
  pages = {054307},
  numpages = {13},
  year = {2023},
  month = {08},
  publisher = {American Physical Society},
  doi = {10.1103/PhysRevB.108.054307},
  url = {https://link.aps.org/doi/10.1103/PhysRevB.108.054307}
}

@article{Razieh19,
  title = {Optimizing measurement strengths for qubit quasiprobabilities behind out-of-time-ordered correlators},
  author = {Mohseninia, Razieh and Alonso, Jos\'e Ra\'ul Gonz\'alez and Dressel, Justin},
  journal = {Phys. Rev. A},
  volume = {100},
  issue = {6},
  pages = {062336},
  numpages = {6},
  year = {2019},
  month = {12},
  publisher = {American Physical Society},
  doi = {10.1103/PhysRevA.100.062336},
  url = {https://link.aps.org/doi/10.1103/PhysRevA.100.062336}
}

@article{Dressel15,
  title = {Weak values as interference phenomena},
  author = {Dressel, Justin},
  journal = {Phys. Rev. A},
  volume = {91},
  issue = {3},
  pages = {032116},
  numpages = {14},
  year = {2015},
  month = {03},
  publisher = {American Physical Society},
  doi = {10.1103/PhysRevA.91.032116},
  url = {https://link.aps.org/doi/10.1103/PhysRevA.91.032116}
}

@article{Dressel14,
  title = {Colloquium: Understanding quantum weak values: Basics and applications},
  author = {Dressel, Justin and Malik, Mehul and Miatto, Filippo M. and Jordan, Andrew N. and Boyd, Robert W.},
  journal = {Rev. Mod. Phys.},
  volume = {86},
  issue = {1},
  pages = {307--316},
  numpages = {10},
  year = {2014},
  month = {03},
  publisher = {American Physical Society},
  doi = {10.1103/RevModPhys.86.307},
  url = {https://link.aps.org/doi/10.1103/RevModPhys.86.307}
}

@article{Pang15,
  title = {Improving the Precision of Weak Measurements by Postselection Measurement},
  author = {Pang, Shengshi and Brun, Todd A.},
  journal = {Phys. Rev. Lett.},
  volume = {115},
  issue = {12},
  pages = {120401},
  numpages = {6},
  year = {2015},
  month = {09},
  publisher = {American Physical Society},
  doi = {10.1103/PhysRevLett.115.120401},
  url = {https://link.aps.org/doi/10.1103/PhysRevLett.115.120401}
}

@article{Pang14,
  title = {Entanglement-Assisted Weak Value Amplification},
  author = {Pang, Shengshi and Dressel, Justin and Brun, Todd A.},
  journal = {Phys. Rev. Lett.},
  volume = {113},
  issue = {3},
  pages = {030401},
  numpages = {6},
  year = {2014},
  month = {07},
  publisher = {American Physical Society},
  doi = {10.1103/PhysRevLett.113.030401},
  url = {https://link.aps.org/doi/10.1103/PhysRevLett.113.030401}
}

@article{Jenne22,
  title = {Unbounded and lossless compression of multiparameter quantum information},
  author = {Jenne, Joe H. and Arvidsson-Shukur, David R. M.},
  journal = {Phys. Rev. A},
  volume = {106},
  issue = {4},
  pages = {042404},
  numpages = {9},
  year = {2022},
  month = {10},
  publisher = {American Physical Society},
  doi = {10.1103/PhysRevA.106.042404},
  url = {https://link.aps.org/doi/10.1103/PhysRevA.106.042404}
}

@article{gross2006,
  title = {Hudson's Theorem for Finite-Dimensional Quantum Systems},
  author = {Gross, D.},
  year = {2006},
  month = 12,
  journal = {Journal of Mathematical Physics},
  volume = {47},
  number = {12},
  pages = {122107},
  publisher = {{American Institute of Physics}},
  issn = {0022-2488},
  doi = {10.1063/1.2393152},
  abstract = {We show that, on a Hilbert space of odd dimension, the only pure states to possess a non-negative Wigner function are stabilizer states. The Clifford group is identified as the set of unitary operations which preserve positivity. The result can be seen as a discrete version of Hudson's theorem. Hudson established that for continuous variable systems, the Wigner function of a pure state has no negative values if and only if the state is Gaussian. Turning to mixed states, it might be surmised that only convex combinations of stabilizer states give rise to non-negative Wigner distributions. We refute this conjecture by means of a counterexample. Further, we give an axiomatic characterization which completely fixes the definition of the Wigner function and compare two approaches to stabilizer states for Hilbert spaces of prime-power dimensions. In the course of the discussion, we derive explicit formulas for the number of stabilizer codes defined on such systems.}
}

@article{Kirkwood33,
  title={Quantum statistics of almost classical assemblies},
  author={Kirkwood, John G},
  journal={Physical Review},
  volume={44},
  number={1},
  pages={31},
  year={1933},
  publisher={APS}
}

@article{Dirac45,
  title = {On the Analogy Between Classical and Quantum Mechanics},
  author = {Dirac, P. A. M.},
  journal = {Rev. Mod. Phys.},
  volume = {17},
  issue = {2-3},
  pages = {195--199},
  numpages = {0},
  year = {1945},
  month = {4},
  publisher = {American Physical Society},
  doi = {10.1103/RevModPhys.17.195},
  url = {https://link.aps.org/doi/10.1103/RevModPhys.17.195}
}

@article{Spekkens08,
  title = {Negativity and Contextuality are Equivalent Notions of Nonclassicality},
  author = {Spekkens, Robert W.},
  journal = {Phys. Rev. Lett.},
  volume = {101},
  issue = {2},
  pages = {020401},
  numpages = {4},
  year = {2008},
  month = {7},
  publisher = {American Physical Society},
  doi = {10.1103/PhysRevLett.101.020401},
  url = {https://link.aps.org/doi/10.1103/PhysRevLett.101.020401}
}

@article{XU2024,
title = {{Kirkwood-Dirac} classical pure states},
journal = {Physics Letters A},
volume = {510},
pages = {129529},
year = {2024},
issn = {0375-9601},
doi = {https://doi.org/10.1016/j.physleta.2024.129529},
url = {https://www.sciencedirect.com/science/article/pii/S0375960124002238},
author = {Jianwei Xu},
keywords = {Kirkwood-Dirac representation, Mutually unbiased bases, Discrete Fourier transformation},
abstract = {Kirkwood-Dirac (KD) distribution is a representation of quantum states. Recently, KD distribution has been found applications in many areas such as in quantum metrology, quantum chaos and foundations of quantum theory. KD distribution is a quasiprobability distribution, and negative or nonreal elements may signify quantum advantages in certain tasks. A quantum state is called KD classical if its KD distribution is a probability distribution. Since most quantum information processings use pure states as ideal resources, then a key problem is to determine whether a quantum pure state is KD classical. In this paper, we provide some characterizations for the general structure of KD classical pure states. As an application of our results, we prove a conjecture raised by De Bièvre (2021) [56] which finds out all KD classical pure states for discrete Fourier transformation.}
}

@article{Arvidsson-Shukur2020,
	author = {Arvidsson-Shukur, David R. M. and Yunger Halpern, Nicole and Lepage, Hugo V. and Lasek, Aleksander A. and Barnes, Crispin H. W. and Lloyd, Seth},
	doi = {10.1038/s41467-020-17559-w},
	journal = {Nat. Commun.},
	number = {1},
	pages = {3775},
	title = {Quantum advantage in postselected metrology},
	url = {https://doi.org/10.1038/s41467-020-17559-w},
	volume = {11},
	year = {2020}
	}

@article{arvidsson-shukuretal2021,
	title = {Conditions tighter than noncommutation needed for nonclassicality},
	volume = {54},
	issn = {1751-8121},
	url = {https://doi.org/10.1088/1751-8121/ac0289},
	doi = {10.1088/1751-8121/ac0289},
	abstract = {Kirkwood discovered in 1933, and Dirac discovered in 1945, a representation of quantum states that has undergone a renaissance recently. The Kirkwood–Dirac (KD) distribution has been employed to study nonclassicality across quantum physics, from metrology to chaos to the foundations of quantum theory. The KD distribution is a quasiprobability distribution, a quantum generalization of a probability distribution, which can behave nonclassically by having negative or nonreal elements. Negative KD elements signify quantum information scrambling and potential metrological quantum advantages. Nonreal elements encode measurement disturbance and thermodynamic nonclassicality. KD distributions’ nonclassicality has been believed to follow necessarily from pairwise noncommutation of operators in the distribution’s definition. We show that noncommutation does not suffice. We prove sufficient conditions for the KD distribution to be nonclassical (equivalently, necessary conditions for it to be classical). We also quantify the KD nonclassicality achievable under various conditions. This work resolves long-standing questions about nonclassicality and may be used to engineer quantum advantages.},
	number = {28},
	journal = {Journal of Physics A: Mathematical and Theoretical},
	author = {Arvidsson-Shukur, David R. M. and Drori, Jacob Chevalier and Yunger Halpern, Nicole },
	month = 6,
	year = {2021},
	keywords = {KD Distribution},
	pages = {284001},
}

@article{debievre2021,
	title = {Complete {Incompatibility}, {Support} {Uncertainty}, and {Kirkwood}-{Dirac} {Nonclassicality}},
	volume = {127},
	issn = {0031-9007, 1079-7114},
	url = {https://link.aps.org/doi/10.1103/PhysRevLett.127.190404},
	doi = {10.1103/PhysRevLett.127.190404},
	number = {19},
	journal = {Physical Review Letters},
	author = {De Bièvre, Stephan},
	month = 11,
	year = {2021},
	keywords = {KD Distribution},
	pages = {190404},
}

@article {DeB23,
    AUTHOR = {De Bi\`evre, Stephan},
     TITLE = {Relating incompatibility, noncommutativity, uncertainty, and
              {K}irkwood--{D}irac nonclassicality},
   JOURNAL = {J. Math. Phys.},
  FJOURNAL = {Journal of Mathematical Physics},
    VOLUME = {64},
      YEAR = {2023},
    NUMBER = {2},
     PAGES = {Paper No. 022202, 31},
      ISSN = {0022-2488},
   MRCLASS = {81P15 (81P40 81P45)},
  MRNUMBER = {4545019},
       DOI = {10.1063/5.0110267},
       URL = {https://doi.org/10.1063/5.0110267},
}

@article{mityagin2020a,
	title = {The Zero Set of a Real Analytic Function},
	volume = {107},
	issn = {1573-8876},
	url = {https://doi.org/10.1134/S0001434620030189},
	doi = {10.1134/S0001434620030189},
	pages = {529--530},
	number = {3},
	journal = {Mathematical Notes},
	shortjournal = {Mathematical Notes},
	author = {Mityagin, B. S.},
	year = {2020},
}

@article{langrenez2023characterizing,
    author = {Langrenez, C. and Arvidsson-Shukur, D. R. M. and De Bièvre, S.},
    title = "{Characterizing the geometry of the Kirkwood–Dirac-positive states}",
    journal = {Journal of Mathematical Physics},
    volume = {65},
    number = {7},
    pages = {072201},
    year = {2024},
    month = {07},
    issn = {0022-2488},
    doi = {10.1063/5.0164672},
    url = {https://doi.org/10.1063/5.0164672},
}

@book{spivak1965,
	title = {Calculus on Manifolds: A Modern Approach to Classical Theorems of Advanced Calculus},
	shorttitle = {Calculus on Manifolds},
	abstract = {This little book is especially concerned with those portions of ”advanced calculus” in which the subtlety of the concepts and methods makes rigor difficult to attain at an elementary level. The approach taken here uses elementary versions of modern methods found in sophisticated mathematics. The formal prerequisites include only a term of linear algebra, a nodding acquaintance with the notation of set theory, and a respectable first-year calculus course (one which at least mentions the least upper bound (sup) and greatest lower bound (inf) of a set of real numbers). Beyond this a certain (perhaps latent) rapport with abstract mathematics will be found almost essential.},
	pagetotal = {164},
	publisher = {Avalon Publishing},
	author = {Spivak, Michael},
	year = {1965},
	langid = {english},
}

@book{krantzparks2013,
	location = {New York, {NY}},
	title = {The Implicit Function Theorem: History, Theory, and Applications},
	url = {https://link.springer.com/10.1007/978-1-4614-5981-1},
	shorttitle = {The Implicit Function Theorem},
	publisher = {Springer},
	author = {Krantz, Steven G. and Parks, Harold R.},
	urldate = {2024-03-04},
	year = {2013},
	langid = {english},
	doi = {10.1007/978-1-4614-5981-1},
}

@book{abrahammarsden1978,
	title = {Foundations Of Mechanics},
	abstract = {Foundations of mechanics: a mathematical exposition of classical mechanics with an introduction to the qualitative theory of dynamical systems and applications to the three-body problem. This text, based on the classic first edition of 1967, has been radically expanded to include, in full detail, all the important developments of the past decade. The introductory mathematics (Part I) has been enlarged to comprise a complete, self-contained text of calculus on manifolds. Formal mechanics (Part {II}) now contains a new and sophisticated exposition of mechanical systems with symmetry, Lagrangian submanifolds, and new applications, such as quantization. The qualitative theory (Part {III}) has been completely rewritten to accomodate the new numerous fundamental discoveries from the research frontier, such as bifurcations. The primary applications to the three-body problem (Part {IV}) has been substantially revised and extended to further illustrate these new results.--},
	pagetotal = {854},
	publisher = {Basic Books},
	author = {Abraham, Ralph and Marsden, Jerold E.},
	year = {1978},
	langid = {english},
}

@book{hall2003,
	location = {New York, {NY}},
	title = {Lie Groups, Lie Algebras, and Representations},
	volume = {222},
	url = {http://link.springer.com/10.1007/978-0-387-21554-9},
	series = {Graduate Texts in Mathematics},
	publisher = {Springer},
	author = {Hall, Brian C.},
	urldate = {2024-03-04},
	year = {2003},
	doi = {10.1007/978-0-387-21554-9},
}

@book{varadarajan1984,
	location = {New York, {NY}},
	title = {Lie Groups, Lie Algebras, and Their Representations},
	volume = {102},
	url = {http://link.springer.com/10.1007/978-1-4612-1126-6},
	series = {Graduate Texts in Mathematics},
	publisher = {Springer},
	author = {Varadarajan, V. S.},
	urldate = {2024-03-18},
	year = {1984},
	doi = {10.1007/978-1-4612-1126-6},
}

@book{Rudin1987,
    author = {Rudin, Walter},
    title = {Real and complex analysis, 3rd ed.},
    year = {1987},
    publisher = {McGraw-Hill, Inc.},
    address = {USA}
}

@article{Alfsen1963, 
    title={A Simplified Constructive Proof of the Existence and Uniqueness of Haar Measure.}, volume={12}, 
    url={https://www.mscand.dk/article/view/10675}, 
    DOI={10.7146/math.scand.a-10675}, 
    journal={Mathematica Scandinavica}, 
    author={Alfsen, E. M.}, 
    year={1963}, 
    month={12}, 
    pages={106–116} 
}

@article{arvidssonshukur2024properties,
  title = {Properties and Applications of the {{Kirkwood}}–{{Dirac}} Distribution},
  author = {Arvidsson-Shukur, David R. M. and Braasch Jr, William F.  and  De Bièvre, Stephan and Dressel, Justin and Jordan, Andrew N. and Langrenez, Christopher and Lostaglio, Matteo and Lundeen, Jeff S. and Yunger Halpern, Nicole },
  year = {2024},
  journal = {New Journal of Physics},
  shortjournal = {New J. Phys.},
  volume = {26},
  number = {12},
  pages = {121201},
  publisher = {IOP Publishing},
  issn = {1367-2630},
  doi = {10.1088/1367-2630/ada05d},
  langid = {english}
}

@article{langrenezetal2024a,
doi = {10.1088/1751-8121/add58f},
url = {https://doi.org/10.1088/1751-8121/add58f},
year = {2025},
month = {may},
publisher = {IOP Publishing},
volume = {58},
number = {21},
pages = {215301},
author = {Langrenez, Christopher and Arvidsson-Shukur, David R M and De Bièvre, Stephan},
title = {{Convex roofs witnessing Kirkwood–Dirac nonpositivity}},
journal = {Journal of Physics A: Mathematical and Theoretical},
abstract = {Given two orthonormal bases  and , one can associate to every quantum state a Kirkwood–Dirac (KD) quasiprobability distribution. KD distributions are like joint classical probabilities except that they can have negative or nonreal values, which are associated to nonclassical features of the state. In the last decade, KD distributions have proven to be versatile tools to investigate and construct quantum advantages and nonclassical phenomena. KD distributions are also used to determine quantum–classical boundaries, for example in the theory of (non)contextuality. Optimal use of the KD distributions relies on useful witnesses of KD nonpositivity. Previous works have established such a witness based on the observation that KD-nonpositive pure states have large support uncertainty with respect to the bases  and . Here, we extend this witness to mixed states. We introduce two convex roof constructions: the convex roof of the support uncertainty and the convex roof of the KD nonpositivity. These constructions both witness the set of quantum states that cannot be written as a convex combination of KD-positive pure states. Moreover, the first one extends to mixed states the relation between two phenomena: large support uncertainty and the fact that the state belongs to the convex hull of the KD-positive pure states. The convex roof of the KD nonpositivity together with the KD nonpositivity itself give a necessary and sufficient condition for a state to be KD positive but not to belong to the convex hull of the KD-positive pure states.}
}

@article{lostaglio2023kirkwood,
  title={{Kirkwood-Dirac} quasiprobability approach to the statistics of incompatible observables},
  author={Lostaglio, Matteo and Belenchia, Alessio and Levy, Amikam and Hern{\'a}ndez-G{\'o}mez, Santiago and Fabbri, Nicole and Gherardini, Stefano},
  journal={Quantum},
  volume={7},
  pages={1128},
  year={2023},
  publisher={Verein zur F{\"o}rderung des Open Access Publizierens in den Quantenwissenschaften}
}

@article{hudson1974wigner,
  title={{When is the Wigner quasi-probability density non-negative?}},
  author={Hudson, Robin L},
  journal={Reports on Mathematical Physics},
  volume={6},
  number={2},
  pages={249--252},
  year={1974},
  publisher={Elsevier}
}

@book{simon1996representations,
  title={Representations of finite and compact groups},
  author={Simon, Barry},
  number={10},
  year={1996},
  publisher={American Mathematical Soc.}
}

@article{williams2008weak,
  title={{Weak values and the Leggett-Garg inequality in solid-state qubits}},
  author={Williams, Nathan S and Jordan, Andrew N},
  journal={Physical review letters},
  volume={100},
  number={2},
  pages={026804},
  year={2008},
  publisher={APS}
}

@article{YungHalp19,
  title = {Out-of-Time-Ordered-Correlator Quasiprobabilities Robustly Witness Scrambling},
  author = {Gonz\'alez Alonso, Jos\'e Ra\'ul and Yunger Halpern, Nicole and Dressel, Justin},
  journal = {Phys. Rev. Lett.},
  volume = {122},
  issue = {4},
  pages = {040404},
  numpages = {7},
  year = {2019},
  month = {2},
  publisher = {American Physical Society},
  doi = {10.1103/PhysRevLett.122.040404},
  url = {https://link.aps.org/doi/10.1103/PhysRevLett.122.040404}
}

@article{Veitch_2012,
doi = {10.1088/1367-2630/14/11/113011},
url = {https://dx.doi.org/10.1088/1367-2630/14/11/113011},
year = {2012},
month = {11},
publisher = {IOP Publishing},
volume = {14},
number = {11},
pages = {113011},
author = {Victor Veitch and Christopher Ferrie and David Gross and Joseph Emerson},
title = {Negative quasi-probability as a resource for quantum computation},
journal = {New Journal of Physics}
}

@article{Veitch_2014,
doi = {10.1088/1367-2630/16/1/013009},
url = {https://dx.doi.org/10.1088/1367-2630/16/1/013009},
year = {2014},
month = {1},
publisher = {IOP Publishing},
volume = {16},
number = {1},
pages = {013009},
author = {Victor Veitch and S A Hamed Mousavian and Daniel Gottesman and Joseph Emerson},
title = {The resource theory of stabilizer quantum computation},
journal = {New Journal of Physics}
}

@article{Raussendorf20,
  title = {Phase-space-simulation method for quantum computation with magic states on qubits},
  author = {Raussendorf, Robert and Bermejo-Vega, Juani and Tyhurst, Emily and Okay, Cihan and Zurel, Michael},
  journal = {Phys. Rev. A},
  volume = {101},
  issue = {1},
  pages = {012350},
  numpages = {21},
  year = {2020},
  month = {1},
  publisher = {American Physical Society},
  doi = {10.1103/PhysRevA.101.012350},
  url = {https://link.aps.org/doi/10.1103/PhysRevA.101.012350}
}

@article{Lostaglio20,
  title = {Certifying Quantum Signatures in Thermodynamics and Metrology via Contextuality of Quantum Linear Response},
  author = {Lostaglio, Matteo},
  journal = {Phys. Rev. Lett.},
  volume = {125},
  issue = {23},
  pages = {230603},
  numpages = {6},
  year = {2020},
  month = {12},
  publisher = {American Physical Society},
  doi = {10.1103/PhysRevLett.125.230603},
  url = {https://link.aps.org/doi/10.1103/PhysRevLett.125.230603}
}

@article{Lostaglio18,
  title = {Quantum Fluctuation Theorems, Contextuality, and Work Quasiprobabilities},
  author = {Lostaglio, Matteo},
  journal = {Phys. Rev. Lett.},
  volume = {120},
  issue = {4},
  pages = {040602},
  numpages = {6},
  year = {2018},
  month = {1},
  publisher = {American Physical Society},
  doi = {10.1103/PhysRevLett.120.040602},
  url = {https://link.aps.org/doi/10.1103/PhysRevLett.120.040602}
}

@article{lupu2022negative,
  title={Negative quasiprobabilities enhance phase estimation in quantum-optics experiment},
  author={Lupu-Gladstein, Noah and Yilmaz, Y Batuhan and Arvidsson-Shukur, David R. M. and Brodutch, Aharon and Pang, Arthur OT and Steinberg, Aephraim M and Yunger Halpern, Nicole },
  journal={Physical Review Letters},
  volume={128},
  number={22},
  pages={220504},
  year={2022},
  publisher={APS}
}

@article{mandilara2009extending,
  title = {{Extending Hudson's theorem to mixed quantum states}},
  author = {Mandilara, A. and Karpov, E. and Cerf, N. J.},
  journal = {Phys. Rev. A},
  volume = {79},
  issue = {6},
  pages = {062302},
  numpages = {4},
  year = {2009},
  month = {6},
  publisher = {American Physical Society},
  doi = {10.1103/PhysRevA.79.062302},
  url = {https://link.aps.org/doi/10.1103/PhysRevA.79.062302}
}

@article{Aharonov88,
  title = {How the result of a measurement of a component of the spin of a spin-1/2 particle can turn out to be 100},
  author = {Aharonov, Yakir and Albert, David Z. and Vaidman, Lev},
  journal = {Phys. Rev. Lett.},
  volume = {60},
  issue = {14},
  pages = {1351--1354},
  numpages = {0},
  year = {1988},
  month = {4},
  publisher = {American Physical Society},
  doi = {10.1103/PhysRevLett.60.1351},
  url = {https://link.aps.org/doi/10.1103/PhysRevLett.60.1351}
}

@article{vanherstraetencerf2021,
	title = {{Quantum Wigner entropy}},
	volume = {104},
	url = {https://link.aps.org/doi/10.1103/PhysRevA.104.042211},
	doi = {10.1103/PhysRevA.104.042211},
	abstract = {We define the Wigner entropy of a quantum state as the differential Shannon entropy of the Wigner function of the state. This quantity is properly defined only for states that possess a positive Wigner function, which we name Wigner-positive states, but we argue that it is a proper measure of quantum uncertainty in phase space. It is invariant under symplectic transformations (displacements, rotations, and squeezing) and we conjecture that it is lower bounded by lnπ+1 within the convex set of Wigner-positive states. It reaches this lower bound for Gaussian pure states, which are natural minimum-uncertainty states. This conjecture bears a resemblance with the Wehrl-Lieb conjecture, and we prove it over the subset of passive states of the harmonic oscillator which are of particular relevance in quantum thermodynamics. Along the way, we present a simple technique to build a broad class of Wigner-positive states exploiting an optical beam splitter and reveal an unexpectedly simple convex decomposition of extremal passive states. The Wigner entropy is anticipated to be a significant physical quantity, for example, in quantum optics where it allows us to establish a Wigner entropy-power inequality. It also opens a way towards stronger entropic uncertainty relations. Finally, we define the Wigner-Rényi entropy of Wigner-positive states and conjecture an extended lower bound that is reached for Gaussian pure states.},
	pages = {042211},
	number = {4},
	journal = {Physical Review A},
	shortjournal = {Phys. Rev. A},
	author = {Van Herstraeten, Zacharie and Cerf, Nicolas J.},
	urldate = {2023-01-31},
	date = {2021-10-13},
    year={2021},
	note = {Publisher: American Physical Society},
}

@article{yangetal2023a,
	title = {Characterizing an {Uncertainty} {Diagram} and {Kirkwood}–{Dirac} {Nonclassicality} {Based} on {Discrete} {Fourier} {Transform}},
	volume = {25},
	issn = {1099-4300},
	url = {https://www.ncbi.nlm.nih.gov/pmc/articles/PMC10377937/},
	doi = {10.3390/e25071075},
	abstract = {In this paper, we investigate an uncertainty diagram and Kirkwood–Dirac (KD) nonclassicality based on discrete Fourier transform (DFT) in a d-dimensional system. We first consider the uncertainty diagram of the DFT matrix, which is a transition matrix from basis A to basis B. Here, the bases A, B are not necessarily completely incompatible. We show that for the uncertainty diagram of the DFT matrix, there is no “hole” in the region of the (nA,nB) plane above and on the line nA+nB=d+1. Then, we present where the holes are in the region strictly below the line and above the hyperbola nAnB=d. Finally, we provide an alternative proof of the conjecture about KD nonclassicality based on DFT.},
	number = {7},
	urldate = {2024-05-09},
	journal = {Entropy},
	author = {Yang, Ying-Hui and Zhang, Bing-Bing and Wang, Xiao-Li and Geng, Shi-Jiao and Chen, Pei-Ying},
	month = 7,
	year = {2023},
	pmid = {37510021},
	pmcid = {PMC10377937},
	pages = {1075},
}

@article{xu2024a,
    author = {Xu, Jianwei},
    title = {{Hermitian Kirkwood–Dirac-real operators for discrete Fourier transformations}},
    journal = {Journal of Mathematical Physics},
    volume = {66},
    number = {10},
    pages = {102104},
    year = {2025},
    month = {10},
    abstract = {The Kirkwood-Dirac (KD) distribution is a quantum state representation that relies on two chosen fixed orthonormal bases, or alternatively, on the transition matrix of these two bases. In recent years, it has been discovered that the KD distribution has numerous applications in quantum information science. The presence of negative or nonreal KD distributions may indicate certain quantum features or advantages. If the KD distribution of a quantum state consists solely of positive or zero elements, the state is called a KD-positive state. Consequently, a crucial inquiry arises regarding the determination of whether a quantum state is KD-positive when subjected to various physically relevant transition matrices. When the transition matrix is discrete Fourier transform (DFT) matrix of dimension p [Langrenez et al., J. Math. Phys. 65, 072201 (2024)] or p2 [Yang et al., J. Phys. A: Math. Theor. 57, 435303 (2024)] with p being prime, it is proved that any KD-positive state can be expressed as a convex combination of pure KD-positive states. In this work, we prove that when the transition matrix is the DFT matrix of any finite dimension, any KD-positive state can be expressed as a real linear combination of pure KD-positive states.},
    issn = {0022-2488},
    doi = {10.1063/5.0256339},
    url = {https://doi.org/10.1063/5.0256339},
}

@article{yangetal2024,
	title = {Geometry of {Kirkwood–Dirac} classical states: a case study based on discrete {Fourier} transform},
	volume = {57},
	issn = {1751-8121},
	url = {https://dx.doi.org/10.1088/1751-8121/ad819a},
	doi = {10.1088/1751-8121/ad819a},
	shorttitle = {Geometry of Kirkwood–Dirac classical states},
	abstract = {The characterization of Kirkwood–Dirac ({KD}) classicality or non-classicality is very important in quantum information processing. In general, the set of {KD} classical states with respect to two bases is not a convex polytope (Langrene et al 2024 J. Math. Phys. 65 072201), which makes us interested in finding out in which circumstances they do form a polytope. In this paper, we focus on the characterization of {KD} classicality of mixed states for the case where the transition matrix between two bases is a discrete Fourier transform ({DFT}) matrix in Hilbert space with dimensions p2 and pq, respectively, where are prime. For the two particular cases we investigate, the sets of extremal points are finite, implying that the set of {KD} classical states we characterize forms a convex polytope. We show that for p2 dimensional system, the set is a convex hull of the set based on {DFT}, where is the set of {KD} classical states with respect to two bases and is the set of all the rank-one projectors of {KD} classical pure states with respect to two bases. In pq dimensional system, we believe that this result also holds. Unfortunately, we do not completely prove it, but some meaningful conclusions are obtained about the characterization of {KD} classicality.},
	pages = {435303},
	number = {43},
	journal = {Journal of Physics A: Mathematical and Theoretical},
	shortjournal = {J. Phys. A: Math. Theor.},
	author = {Yang, Ying-Hui and Yao, Shuang and Geng, Shi-Jiao and Wang, Xiao-Li and Chen, Pei-Ying},
	urldate = {2025-03-23},
	year = {2024},
	langid = {english},
}

@article{x819-898d,
  title = {Kirkwood-Dirac nonpositivity is a necessary resource for quantum computing},
  author = {Thio, Jonathan J. and Yang, Songqinghao and Halpern, Nicole Yunger and Biévre, Stephan De and Barnes, Crispin H. W. and Arvidsson-Shukur, David R. M.},
  journal = {Phys. Rev. Lett.},
  pages = {},
  year = {2026},
  month = {Jun},
  publisher = {American Physical Society},
  doi = {10.1103/x819-898d},
  url = {https://link.aps.org/doi/10.1103/x819-898d}
}

@article{thio2024,
  title = {Verification of contextuality by noncontextual experiments},
  author = {Thio, Jonathan J. and Salmon, Wilfred and Barnes, Crispin H. W. and De Bi\`evre, Stephan and Arvidsson-Shukur, David R. M.},
  journal = {Phys. Rev. A},
  volume = {112},
  issue = {6},
  pages = {062230},
  numpages = {15},
  year = {2025},
  month = {Dec},
  publisher = {American Physical Society},
  doi = {10.1103/mzw5-t326},
  url = {https://link.aps.org/doi/10.1103/mzw5-t326}
}

@article{kunjwal2019,
  title = {Anomalous weak values and contextuality: Robustness, tightness, and imaginary parts},
  author = {Kunjwal, Ravi and Lostaglio, Matteo and Pusey, Matthew F.},
  journal = {Phys. Rev. A},
  volume = {100},
  issue = {4},
  pages = {042116},
  numpages = {19},
  year = {2019},
  month = {10},
  publisher = {American Physical Society},
  doi = {10.1103/PhysRevA.100.042116},
  url = {https://link.aps.org/doi/10.1103/PhysRevA.100.042116}
}

@article{Spekkens2005,
   title={Contextuality for preparations, transformations, and unsharp measurements},
   volume={71},
   ISSN={1094-1622},
   url={http://dx.doi.org/10.1103/PhysRevA.71.052108},
   DOI={10.1103/physreva.71.052108},
   number={5},
   journal={Physical Review A},
   publisher={American Physical Society (APS)},
   author={Spekkens, R. W.},
   year={2005},
   month=may }

@article{PhysRevLett.115.070501,
  title = {Estimating Outcome Probabilities of Quantum Circuits Using Quasiprobabilities},
  author = {Pashayan, Hakop and Wallman, Joel J. and Bartlett, Stephen D.},
  journal = {Phys. Rev. Lett.},
  volume = {115},
  issue = {7},
  pages = {070501},
  numpages = {5},
  year = {2015},
  month = {08},
  publisher = {American Physical Society},
  doi = {10.1103/PhysRevLett.115.070501},
  url = {https://link.aps.org/doi/10.1103/PhysRevLett.115.070501}
}

@article{debievre2025kirkwooddiracrepresentationassociatedfourier,
      title = {{The Kirkwood-Dirac Representation Associated to the Fourier Transform for Finite Abelian Groups: Positivity}},
	issn = {1424-0661},
	url = {https://doi.org/10.1007/s00023-025-01614-7},
	doi = {10.1007/s00023-025-01614-7},
	shorttitle = {The Kirkwood-Dirac Representation Associated to the Fourier Transform for Finite Abelian Groups},
	abstract = {We construct and study the Kirkwood-Dirac ({KD}) representations naturally associated to the Fourier transform of finite abelian groups G. We identify all pure {KD}-positive states and all {KD}-real observables for these {KD} representations. We provide a necessary and sufficient condition ensuring that all {KD}-positive states are convex combinations of pure {KD}-positive states. We prove that for \$\$G={\textbackslash}mathbb \{Z\}\_\{d\}\$\$, with d a prime power, this condition is satisfied. We provide examples of abelian groups where it is not. In those cases, the convex set of {KD}-positive states contains states outside the convex hull of the pure {KD}-positive states.},
	journal = {Annales Henri Poincaré},
	shortjournal = {Ann. Henri Poincaré},
	author = {De Bièvre, Stephan and Langrenez, Christopher and Radchenko, Danylo},
	urldate = {2025-10-28},
	year = {2025},
	langid = {english},
}

@article{chabaudetal2021,
	title = {{Witnessing Wigner Negativity}},
	volume = {5},
	url = {https://quantum-journal.org/papers/q-2021-06-08-471/},
	doi = {10.22331/q-2021-06-08-471},
	abstract = {Ulysse Chabaud, Pierre-Emmanuel Emeriau, and Frédéric Grosshans,
Quantum 5, 471 (2021).
Negativity of the Wigner function is arguably one of the most striking non-classical features of quantum states. Beyond its fundamental relevance, it is also a necessary resource for quantum…},
	pages = {471},
	journal = {Quantum},
	author = {Chabaud, Ulysse and Emeriau, Pierre-Emmanuel and Grosshans, Frédéric},
	urldate = {2025-02-18},
	year = {2021},
	langid = {british},
}

@article{johansen2007,
	title = {Quantum theory of successive projective measurements},
	volume = {76},
	issn = {1050-2947, 1094-1622},
	url = {https://link.aps.org/doi/10.1103/PhysRevA.76.012119},
	doi = {10.1103/PhysRevA.76.012119},
	pages = {012119},
	number = {1},
	journal = {Physical Review A},
	shortjournal = {Phys. Rev. A},
	author = {Johansen, Lars M.},
	year = {2007},
	langid = {english},
}

\appendix
\renewcommand{\theequation}{\thesection\arabic{equation}}
\renewcommand{\theHequation}{\thesection.\arabic{equation}}

\section{\texorpdfstring{On the zeroes of a non-zero polynomial function on $\Ucal(d)$}{On the zeroes of a non-zero polynomial function on U(d)}}\label{sec:Zeroes}
For convenience, we recall here Proposition~\ref{prop:zero_sets_analytical}. We expect this result to be well known, but we didn't find a proof in the literature. 
\addtocounter{Prop}{-1}
\begin{Prop}
    Let $f: \Ucal(d)\to \R$ be a polynomial in the real and imaginary matrix elements of $U$. Suppose that there exists at least one unitary matrix $\Uex$ for which $f(\Uex)\not=0$; then the zero-set of $f$ has vanishing Haar measure.
\end{Prop}

We first recall some known properties of the unitary group $\Ucal(d)$. 
We then show that the zero set of an analytic function on a connected analytic manifold has zero measure, 
implying Proposition~\ref{prop:zero_sets_analytical}. Note that throughout this Appendix, we call a function analytic if it is a real analytic function.

First, we need some basic facts on the unitary group $\Ucal(d)= \left\{U\in \MatC{d}, U^{\dag}U= \id\right\}$ that we collect here. We view a matrix $\Ucal\in U(d)$ as a tuple of $d$ columns:
\begin{equation}
    U=(U_1,\dots, U_d).
\end{equation}
The $U_j$ satisfy the orthonormality relations
\begin{equation}\label{eqn:orthomality}
U_j^{\dagger}U_k=\delta_{jk}.
\end{equation}
We write
\begin{equation}
U=R+iS,\quad \textrm{and}\quad  U_j=R_j+iS_j,
\end{equation}
for real and imaginary parts, and thus identify $U$ with an element of $\MatR{d}\times\MatR{d}\simeq\R^{2d^2}$. \\

The orthonormality relations of equation \eqref{eqn:orthomality} become, for all $j,k$:
\begin{equation}\label{eq:unitaryreal}
    R_j^TR_k+S_k^TS_j=\delta_{jk}, \quad R_j^TS_k-S_j^TR_k=0.
\end{equation}

\begin{Prop}\label{prop:Ud}
$\Ucal(d) $ is  compact and connected. Moreover, it is a $d^2$ dimensional analytic submanifold of $\MatR{d}\times\MatR{d}\simeq\R^{2d^2}$.
\end{Prop}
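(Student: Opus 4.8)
The plan is to verify the three assertions---compactness, connectedness, and the analytic submanifold structure---in turn, the last being the substantive one. Compactness is immediate: viewing $\Ucal(d)\subseteq\MatC{d}\simeq\R^{2d^2}$, the set $\Ucal(d)$ is the preimage of $\{\id\}$ under the continuous map $U\mapsto U^\dag U$, hence closed, and the orthonormality relations~\eqref{eqn:orthomality} force $\norme{U}{\mathrm{HS}}^2=\Tr(U^\dag U)=d$, so $\Ucal(d)$ is bounded; Heine--Borel then gives compactness. For connectedness I would invoke the spectral theorem: every $U\in\Ucal(d)$ can be written as $U=\exp(iH)$ for some self-adjoint $H\in\SAO$, and the path $t\mapsto\exp(itH)$, $t\in[0,1]$, stays in $\Ucal(d)$ and joins $\id$ to $U$. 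Since every point is joined to $\id$ by a continuous path, $\Ucal(d)$ is path-connected, hence connected.

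The analytic submanifold structure is where the real work lies, and I would obtain it from the (analytic) regular value theorem. Define $f:\MatC{d}\to\SAO$ by $f(U)=U^\dag U-\id$; since $U^\dag U$ is always self-adjoint, the codomain is genuinely the real vector space $\SAO$ of dimension $d^2$, and $\Ucal(d)=f^{-1}(0)$. The map $f$ is a polynomial in the real and imaginary parts of the entries of $U$, hence real analytic. The crux is to show $0$ is a regular value, i.e. that the differential $df_U(M)=M^\dag U+U^\dag M$ is surjective onto $\SAO$ for every $U\in\Ucal(d)$. Given any $H\in\SAO$, I would exhibit a preimage explicitly by taking $M=\tfrac12 UH$, so that $M^\dag U+U^\dag M=\tfrac12 H^\dag U^\dag U+\tfrac12 U^\dag U H=H$, using $U^\dag U=\id$ and $H^\dag=H$. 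Surjectivity of $df_U$ then follows, and the analytic regular value theorem yields that $\Ucal(d)$ is an analytic submanifold of dimension $2d^2-\dim\SAO=2d^2-d^2=d^2$.

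I expect the main obstacle to be one of bookkeeping rather than depth: correctly identifying the codomain of $f$ as the self-adjoint matrices $\SAO$ (real dimension $d^2$) rather than all of $\MatC{d}$ (real dimension $2d^2$). If one mistakenly uses the larger target, the differential cannot be surjective and the dimension count collapses; restricting the target to $\SAO$ is exactly what makes $0$ a regular value and produces the correct dimension $d^2$. A minor alternative, closer to the setup of Eq.~\eqref{eq:unitaryreal}, would be to compute the rank of the Jacobian of those real defining equations directly, but the intrinsic computation with $df_U(M)=M^\dag U+U^\dag M$ is cleaner and coordinate-free.
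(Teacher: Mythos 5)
Your proposal is correct and takes essentially the same route as the paper: $\Ucal(d)$ is exhibited as a level set of the polynomial, hence analytic, map $U\mapsto U^\dag U$ with values in a $d^2$-dimensional real vector space (your codomain $\SAO$ is precisely the paper's $\mcl{S}_{d}(\R)\times\mcl{A}_{d}(\R)$ after splitting into real and imaginary parts), and the conclusion follows from the analytic level-set theorem (Theorem~\ref{thm:Appendixsec5_1}). The only differences are executional: you verify surjectivity of $df_U(M)=M^\dag U+U^\dag M$ at every $U\in\Ucal(d)$ via the explicit preimage $M=\tfrac{1}{2}UH$, whereas the paper checks the rank only at $(\id,0_{d,d})$ and propagates it by the left-invariance $f\circ g_U=f$; your variant is slightly more direct, and you additionally supply the compactness and connectedness arguments that the paper cites as standard.
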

 The proof that $\Ucal(d)$ is compact and connected is standard, see for example~\cite{hall2003}. We did not find a proof of the analytic part of the proposition in the literature, although it is certainly well known. We provide an argument below. Let us first recall that a function $f:\R^n\to \R^m$ is analytic if it has, for each $x\in\R^n$, a convergent Taylor series expansion (in some neighbourhood of $x$). A submanifold of $\R^n$ is said to be a $k$-dimensonal analytic submanifold if it can be locally parametrized with analytic coordinates~\cite{varadarajan1984}.

Theorem~\ref{thm:Appendixsec5_1} states that a level surface of an analytic function $f$ is an analytic submanifold, provided $f$ has no critical points on the level surface. Theorem~\ref{thm:Appendixsec5_1} is the adaptation to the analytic setting of a well-known result in the $C^\infty$ setting, based on the analytic implicit function theorem; we omit the details (see Theorem 5.3 in~\cite{spivak1965} and Theorem 6.3.2 in~\cite{krantzparks2013}).

\begin{Theorem}\label{thm:Appendixsec5_1}
Let $(n,m)\in\N^2$ such that $m < n$. Let $f: \R^{n} \to \R^{m}$ be an analytic function and $k\in\R^m$. We denote by $\Sigma_{k} = \left\{ x\in \R^{n} \mid f(x)=k\right\}$. Suppose that for all $x\in \Sigma_{k}$, $Df(x)$ has rank $m$ (\textit{i.e}. is of maximal rank). Then, $\Sigma_k$ is an analytic submanifold of $\R^{n}$ with dimension $n-m$.
\end{Theorem}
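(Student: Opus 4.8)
The plan is to reduce Theorem~\ref{thm:Appendixsec5_1} to the analytic implicit function theorem and then patch together local graph parametrizations into analytic charts. Since being an analytic submanifold of dimension $n-m$ is a local property, I would first fix an arbitrary point $x_0\in\Sigma_k$ and show that $\Sigma_k$ admits an analytic local parametrization by an open subset of $\R^{n-m}$ near $x_0$; the global statement then follows by letting $x_0$ range over $\Sigma_k$. To set this up I would exploit the rank hypothesis: because $Df(x_0)$ is an $m\times n$ matrix of rank $m$, it has $m$ linearly independent columns, so after relabelling the coordinates I may write $x=(u,v)$ with $u\in\R^{n-m}$ and $v\in\R^m$, chosen so that the partial Jacobian $\partial f/\partial v$ is invertible at $x_0=(u_0,v_0)$.

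Next I would invoke the analytic implicit function theorem (see~\cite{krantzparks2013}): since $f$ is analytic with $f(u_0,v_0)=k$ and $\partial f/\partial v(x_0)$ invertible, there exist open neighbourhoods $V\ni u_0$ in $\R^{n-m}$ and $W\ni v_0$ in $\R^m$, together with a unique analytic map $g:V\to W$, such that for $(u,v)\in V\times W$ one has $f(u,v)=k$ if and only if $v=g(u)$. Consequently $\Sigma_k\cap(V\times W)$ is exactly the graph of $g$, and the map $\varphi:V\to\Sigma_k$, $\varphi(u)=(u,g(u))$, is an analytic parametrization of $\Sigma_k$ near $x_0$. Its inverse is the restriction of the linear (hence analytic) projection $(u,v)\mapsto u$, so $\varphi$ is a homeomorphism onto a relatively open piece of $\Sigma_k$ with analytic inverse.

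Finally, letting $x_0$ vary over $\Sigma_k$, the collection of such graph maps $\varphi$ furnishes an atlas of analytic charts, each defined on an open subset of $\R^{n-m}$ and each compatible on overlaps (transition maps are compositions of the analytic $\varphi$'s and projections, hence analytic). This exhibits $\Sigma_k$ as an $(n-m)$-dimensional analytic submanifold of $\R^n$, as claimed.

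I expect the only genuine obstacle to be the \emph{analyticity} of the implicit function $g$: the existence of a unique $C^\infty$ solution $g$ and the graph description are the classical implicit function theorem, but that $g$ inherits real-analyticity from $f$ is precisely the content of the analytic implicit function theorem. This can either be quoted directly from~\cite{krantzparks2013} or established by a standard majorant argument, verifying that the formal Taylor series of $g$ obtained by implicit differentiation has a positive radius of convergence. Everything else in the argument is routine bookkeeping about coordinate splittings and chart compatibility.
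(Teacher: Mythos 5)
Your proof is correct and takes essentially the same route as the paper: the paper itself omits the details, saying only that the theorem is the adaptation to the analytic setting of the classical regular-value theorem via the analytic implicit function theorem (citing Spivak, Theorem 5.3, and Krantz--Parks, Theorem 6.3.2). Your write-up supplies exactly those omitted details --- the coordinate splitting using the rank hypothesis, the analytic graph parametrization from the implicit function theorem, and the chart compatibility --- and is sound.
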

We can now prove Proposition~\ref{prop:Ud}.\\
\noindent{\bf Proof of Proposition~\ref{prop:Ud}.}
    To prove that $\Ucal(d)$ is an analytic manifold of dimension $d^2$, we consider the map $f : 
             \MatR{d} \times \MatR{d}  \to \mcl{S}_{d}(\R) \times \mcl{A}_{d}(\R) $, with the notations used in Eq.\eqref{eq:unitaryreal}:
        \begin{equation}
        f(R,S) = \left(R^{T}R + S^{T}S, R^{T}S - S^{T}R\right).
    \end{equation}
Here, $\mathcal S_d(\R)$ is the set of real symmetric $d$ by $d$ matrices and $\mathcal A_d(\R)$ the set of real antisymmetric $d$ by $d$ matrices. Clearly, $f$ is analytic since it is a polynomial in the matrix elements of $R$ and $S$.
Consequently, using the notation in Theorem~\ref{thm:Appendixsec5_1}, one sees that $\Ucal(d)=\Sigma_{k}$ with $k=(\id,0_{d,d})$ where $0_{d,d}$ is the zero matrix of dimension $d \times d$ and $f$ is an analytic function.

Now, for a fixed unitary $U = R_{U} + i S_{U}$, we can define $g_{U} : \MatR{d} \times \MatR{d}  \to  \MatR{d} \times \MatR{d}$
\begin{equation}
g_{U}(R,S) = \left(R_{U}R - S_{U}S, R_{U}S + S_{U}R\right) .
\end{equation}
With this definition, one can check that $f\circ g_{U} = f$ for all $U\in \Ucal(d)$, meaning that $f$ is invariant under the left action of unitary matrices. This implies that it is enough to prove that the derivative of $f$ at  $(R,S)=(I_d,0_{d,d})$, $Df(\id, 0_{d,d})$, is of full rank. We find, for all $(H,K)\in \MatR{d}\times\MatR{d},\;$
\[
 Df(\id, 0_{d,d}) (H,K) = (H + H^{T}, K - K^{T}). 
\]

As any matrix $A$ in $\mcl{S}_{d}(\R)$ can be written as $\frac{1}{2}A + \frac{1}{2} A^{T}$ and any matrix $B\in \mcl{A}_{d}(\R)$ can be written as $\frac{1}{2}B - \frac{1}{2} B^{T}$, this proves that $Df(\id, 0_{d,d})$ is surjective in $\mcl{S}_{d}(\R) \times \mcl{A}_{d}(R)$, \textit{i.e.} $f$ is of full rank.

As $\MatR{d}\times\MatR{d} \simeq\R^{2d^2}$ and $\mcl{S}_{d}(\R) \times \mcl{A}_{d}(\R)\simeq\R^{d^2}$, we can apply Theorem~\ref{thm:Appendixsec5_1} to conclude that $\Ucal(d)$ is an analytic submanifold of $\MatR{d}\times\MatR{d}$ of dimension $d^2$.
\qed

We now turn to the results we need on the zeroes of analytic functions on compact, connected and analytic manifolds. 
\begin{Theorem}\label{thm:Appendixsec5_2}
    Let $\Omega$ be an open connected subset of $\R^{n}$ and $f: \Omega \to \C$ be an analytic function. Let $\mcl{Z}(f) = \left\{x\in \Omega \mid f(x)=0\right\} $ be the set of zeroes of $f$. Then either  $\mcl{Z}(f)$ has zero Lebesgue measure or $f$ vanishes identically on $\Omega$, meaning that $\mcl{Z}(f)=\Omega$. 
\end{Theorem}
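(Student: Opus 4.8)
The plan is to prove the stated dichotomy by first establishing an identity principle, then reducing the complex-valued case to a real-valued one, and finally proving the measure-zero alternative by induction on the dimension $n$, localizing to boxes so that the one-dimensional slices used in the induction are connected. To begin, I would record the \emph{identity principle}: if $f$ is real analytic on a connected open $\Omega\subseteq\R^n$ and vanishes on some nonempty open subset, then $f\equiv 0$. Letting $W$ be the set of points at which all partial derivatives of $f$ of every order vanish, $W$ is closed by continuity of each derivative and open because at any point of $W$ the Taylor series of $f$ vanishes identically while $f$ agrees with that series on a neighborhood, forcing $f\equiv 0$ there. Connectedness of $\Omega$ then gives $W=\emptyset$ or $W=\Omega$, and since the vanishing set is contained in $W$, we obtain $W=\Omega$, i.e. $f\equiv 0$. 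This reduces the theorem to showing that \emph{if $f\not\equiv 0$ then $\mcl Z(f)$ has Lebesgue measure zero}. Writing $f=u+iv$ with $u,v$ real analytic, $f\not\equiv 0$ forces one of them, say $u$, to be $\not\equiv 0$, and $\mcl Z(f)\subseteq\mcl Z(u)$, so it suffices to treat a real-valued analytic $f$.

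Next I would prove this real-valued statement by induction on $n$. The base case $n=1$ is the fact that a nonzero real-analytic function on an interval has only isolated zeros: a non-isolated zero would be an accumulation point of zeros, forcing all derivatives to vanish there and hence $f\equiv 0$ by the identity principle, so $\mcl Z(f)$ is countable and thus null. For the inductive step I would work on a box $U\times V$ with $U\subseteq\R^{n-1}$ and $V\subseteq\R$ open intervals, which makes every slice connected. For fixed $x\in U$, the function $y\mapsto f(x,y)$ is either identically zero on $V$ or has isolated, hence measure-zero, zeros. Setting $E=\{x\in U: f(x,\cdot)\equiv 0 \text{ on } V\}$, Tonelli's theorem gives that the measure of $\mcl Z(f)\cap(U\times V)$ equals the integral over $x\in U$ of the one-dimensional measure of the corresponding slice of zeros; this integrand vanishes off $E$, so it remains only to show $|E|=0$.

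The key observation, which I expect to be the heart of the argument, is that $x\in E$ if and only if $\partial_y^k f(x,y_0)=0$ for all $k\ge 0$ at a fixed $y_0\in V$, since an analytic function on the connected interval $V$ is determined by its Taylor data at one point. Hence $E=\bigcap_{k\ge 0}\mcl Z(g_k)$, where $g_k(x):=\partial_y^k f(x,y_0)$ is real analytic on $U\subseteq\R^{n-1}$. If every $g_k$ vanished identically on $U$, then $f$ would vanish on all of $U\times V$, contradicting $f\not\equiv 0$; so some $g_{k_0}\not\equiv 0$, and $E\subseteq\mcl Z(g_{k_0})$ has measure zero by the inductive hypothesis in dimension $n-1$. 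This closes the induction and establishes the measure-zero statement on each box.

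Finally I would globalize: $\Omega$ is covered by countably many such boxes $B$, since $\R^n$ is second countable. For each box the identity principle rules out $f|_B\equiv 0$, for otherwise $f\equiv 0$ on the connected set $\Omega$; so $\mcl Z(f)\cap B$ is null by the local statement, and $\mcl Z(f)$, being a countable union of null sets, is null. The main obstacle is the inductive step, and specifically the analysis of the ``bad'' set $E$ of slices on which $f$ vanishes identically: the decisive point is recognizing $E$ as an intersection of the zero sets of the lower-dimensional analytic functions $g_k$, at least one of which must be nonzero, so that the inductive hypothesis can be applied. The reductions (reducing complex to real, and localizing to boxes to keep slices connected) are routine once this core mechanism is in place.
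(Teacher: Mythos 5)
Your proof is correct, but it is worth pointing out that the paper does not actually prove Theorem~\ref{thm:Appendixsec5_2} at all: it cites \cite{mityagin2020a} and uses the statement as a black box, its own work starting only with the generalization to manifolds (Theorem~\ref{thm:analytic_zero_meas}). What you have written is a self-contained replacement for that citation, and it is essentially the standard argument found in the cited reference: reduce the complex-valued case to the real-valued one via $\mcl{Z}(f)\subseteq\mcl{Z}(u)$; establish the identity principle through the open-and-closed set of points where all derivatives vanish; then induct on dimension, using Tonelli to reduce the measure of $\mcl{Z}(f)$ in a box $U\times V$ to the measure of the bad set $E$ of slices on which $f$ vanishes identically, and controlling $E$ by writing $E=\bigcap_{k}\mcl{Z}(g_k)$ with $g_k(x)=\partial_y^k f(x,y_0)$ analytic on $U$, at least one $g_{k_0}\not\equiv 0$, so that the inductive hypothesis in dimension $n-1$ applies. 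All the delicate points are handled: localizing to boxes keeps the one-dimensional slices connected (so vanishing Taylor data at $y_0$ forces vanishing on all of $V$), and the final globalization by a countable cover plus the identity principle correctly yields the stated dichotomy rather than just a local statement. Two minor points you could make explicit: measurability of $\mcl{Z}(f)$ and of $E$ (needed for Tonelli) follows from closedness of zero sets and from your identity $E=\bigcap_k\mcl{Z}(g_k)$; and the inductive statement is cleanest if phrased for boxes throughout, with the passage to general connected open $\Omega$ done once at the end, exactly as in your last paragraph. What your route buys is self-containedness; what the paper's citation buys is brevity, since the appendix only needs the statement as input to Theorem~\ref{thm:analytic_zero_meas} and Proposition~\ref{prop:zero_sets_analytical}.
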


A proof of this theorem can be found in~\cite{mityagin2020a}. To generalise this theorem, we define a set of zero measure on a manifold as follows. A subset $N$ of a manifold $M$ of dimension $m$ is said to have measure zero if on each local chart $\phi : V \to \Omega$, $\phi( V \cap N) $ is a subset of $\R^{m}$ with vanishing Lebesgue measure. This definition leads to the following theorem on connected and analytic manifolds.

\begin{Theorem}\label{thm:analytic_zero_meas}
    Let $M$ be a connected and analytic manifold of dimension $m$. Let $f: M \to \C$ be an analytic function and $\mcl{Z}(f) = \left\{x\in M \mid f(x)=0\right\} $ be the set of zeroes of f. Then, either $\mcl{Z}(f)$ has zero measure or $f$ vanishes identically on $M$, meaning that $\mcl{Z}(f) = M$.
\end{Theorem}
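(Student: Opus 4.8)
The plan is to reduce the manifold statement (Theorem~\ref{thm:analytic_zero_meas}) to the flat Euclidean case (Theorem~\ref{thm:Appendixsec5_2}) by working chart-by-chart and then using connectedness to propagate the vanishing of $f$ globally. First I would cover $M$ by a countable family of connected coordinate charts $\{(V_\alpha, \phi_\alpha)\}$ with $\phi_\alpha : V_\alpha \to \Omega_\alpha \subseteq \R^m$, where each $\Omega_\alpha$ is open and connected; such a countable cover exists because $M$, being a connected manifold, is second countable. On each chart the pullback $f \circ \phi_\alpha^{-1} : \Omega_\alpha \to \C$ is analytic, since $f$ is analytic and the transition/coordinate maps are analytic by hypothesis.

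Next, I would apply Theorem~\ref{thm:Appendixsec5_2} to each $f \circ \phi_\alpha^{-1}$ on the connected open set $\Omega_\alpha$. This gives a dichotomy on every chart: either $f \circ \phi_\alpha^{-1}$ vanishes identically on $\Omega_\alpha$, or its zero set has vanishing Lebesgue measure in $\R^m$. The core of the argument is to show that these local dichotomies are globally consistent, \emph{i.e.} that it is not possible for $f$ to vanish identically on one chart but not on another. Here I would invoke connectedness: let $U$ be the set of points of $M$ possessing a neighbourhood on which $f$ vanishes identically (the interior of $\mcl{Z}(f)$), and let its complement be the set of points near which $f$ is not identically zero. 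Both sets are open, and their union is $M$; since $M$ is connected, one of them is empty. If $U = M$, then $f \equiv 0$ and $\mcl{Z}(f) = M$. Otherwise $U = \emptyset$, so on no chart does $f$ vanish identically.

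In the latter case, the chart dichotomy forces $\phi_\alpha(V_\alpha \cap \mcl{Z}(f))$ to have vanishing Lebesgue measure for every $\alpha$, which is precisely the definition of $\mcl{Z}(f) \cap V_\alpha$ having measure zero in $M$. Since $\mcl{Z}(f) = \bigcup_\alpha (\mcl{Z}(f) \cap V_\alpha)$ is a countable union of measure-zero sets, and a countable union of measure-zero sets has measure zero, we conclude that $\mcl{Z}(f)$ has measure zero on $M$. This establishes the desired dichotomy.

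The step I expect to be the main obstacle is making the connectedness propagation rigorous, namely verifying that the set $U$ (points near which $f$ vanishes identically) is genuinely both open and closed. Openness of $U$ is immediate from its definition. The delicate point is its complement's openness, which relies on the identity theorem for real analytic functions: if $f$ vanishes on an open subset of a connected chart, it vanishes on the whole chart, and this local vanishing then spreads to overlapping charts because two analytic functions agreeing on an open set of an overlap agree on the entire connected overlap. Handling these overlaps carefully — ensuring the ``vanishes identically'' property is transported consistently across the transition maps — is where the argument must be stated with care, though it follows the standard template for analytic continuation on connected manifolds.
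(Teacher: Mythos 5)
Your overall strategy is the same as the paper's: take a countable cover of $M$ by connected charts, apply the Euclidean dichotomy (Theorem~\ref{thm:Appendixsec5_2}) chart by chart, and use connectedness of $M$ to show that vanishing on one chart forces vanishing everywhere. Your clopen argument for the set $U$ of points near which $f$ vanishes identically is a sound repackaging of the paper's argument that $M$ splits into the disjoint open sets $W$ (union of charts with null zero set) and $Z$ (union of charts where $f\equiv 0$); both hinge on the same observation, namely that a zero set containing a nonempty open subset of a connected chart must, by the dichotomy (equivalently, the real-analytic identity theorem), be the whole chart.

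However, your last step has a genuine gap. You assert that $\phi_\alpha(V_\alpha\cap\mcl{Z}(f))$ being Lebesgue-null ``is precisely the definition of $\mcl{Z}(f)\cap V_\alpha$ having measure zero in $M$.'' It is not: the definition adopted in the paper requires $\phi(V\cap N)$ to be null for \emph{every} chart $(V,\phi)$ of $M$, not only for the chart $\phi_\alpha$ belonging to your chosen cover. To pass from ``null in its own chart'' to ``null in every chart'' you must control images under arbitrary chart changes, i.e. show that for any chart $(V,\phi)$ the set
\begin{equation*}
\phi\bigl(V\cap \mcl{Z}(f)\cap V_\alpha\bigr)
=(\phi\circ\phi_\alpha^{-1})\bigl[\phi_\alpha\bigl(V\cap\mcl{Z}(f)\cap V_\alpha\bigr)\bigr]
\end{equation*}
is Lebesgue-null. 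This is exactly where the paper invokes the fact that the transition map $\phi\circ\phi_\alpha^{-1}$, being analytic and hence differentiable, maps measure-zero sets to measure-zero sets (citing Rudin's Lemma 7.25), after which the countable-union argument closes the proof. The missing fact is standard and the gap is fillable, but as written your ``precisely the definition'' claim is false, and without the chart-change step the conclusion that $\mcl{Z}(f)$ has measure zero in the paper's sense is not established.
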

\begin{proof}
As $M$ is a manifold, it is second-countable and thus $M$ is covered by a countable set of charts $\{V_n\}_{n\in \N}$, i.e. $\cup_{{n\in \N}}V_n = M$. Let $\phi_n : V_{n} \to \Omega_n \subseteq \R^{m}$ denote the coordinate functions on the charts $V_n$.
Note that, if we fix $n\in \N$, $f\circ \phi_{n}^{-1} : \Omega_n \to \C$ is an analytic function on an open set of $\R^{m}$ (because $M$ is analytic). Consequently, by applying Theorem~\ref{thm:Appendixsec5_2}, either $\mcl{Z}(f \circ \phi_{n}^{-1}) $ has a zero Lebesgue measure or $\mcl{Z}(f\circ \phi_{n}^{-1})=\Omega_n$, meaning that $f$ vanishes on $V_{n}$.

We define two sets: $I$ is the set of $n\in \N$ for which $\mcl{Z}(f\circ \phi_{n}^{-1})$ has  zero Lebesgue measure and $J$ is the set of $n\in \N$ for which $f$ vanishes on $V_{n}$. Let $W = \cup_{i\in I} V_{i}$ and $Z = \cup_{j\in J} V_{j}$.
Note that $W,Z$ are the union of open sets, so are open. 
Moreover, as the sets $V_n$ cover $M$, we find $M = W \cup Z$. Additionally, suppose there exists some $p\in W\cap Z$, so that $p\in V_i\cap V_j$ for some $i\in I$ and $j\in J$. Note that, by definition of $J$, $f\circ \phi_{i}^{-1}$ vanishes on $\phi_i(V_i\cap V_j)\subseteq \Omega_i$. However, $\phi_i(V_j\cap V_i)$ is open in $\Omega_i$ and non-empty. Thus, it has positive measure in $\Omega_i$, so that $f\circ \phi_{i}^{-1}$ vanishes on a positive measure subset of $\Omega_i$. This contradicts the definition of $I$, and we deduce $W\cap Z = \emptyset$.

Therefore, as $M$ is connected, either $W = M$ or $Z=M$. If $Z = M$, then $f$ vanishes identically on $M$. If $M=W$, then for any $n\in \N$, $\phi_n(\mcl{Z}(f)\cap V_n) = \mcl{Z}(f\circ \phi_n^{-1})$ has zero Lebesgue measure. Then, for any chart $\phi:V\to\Omega\subseteq \R^m$,
\begin{equation*}
    \phi(\mcl{Z}(f)\cap V) = \bigcup_{\substack{n\in \N \\V_n \cap V \neq \emptyset}}\phi(\mcl{Z}(f)\cap V\cap V_n)
\end{equation*}
and thus,
\begin{equation*}
    \phi(\mcl{Z}(f)\cap V) 
    = \bigcup_{\substack{n\in \N \\V_n \cap V \neq \emptyset}} (\phi\circ \phi_n^{-1})[\mcl{Z}(f\circ \phi_n^{-1})\cap \phi_n(V\cap V_n)].
\end{equation*}
Moreover, for $n\in\N$, $(\phi\circ \phi_n^{-1})$ is analytic, so differentiable, and thus maps measure zero sets to measure zero sets (see \cite{Rudin1987} Lemma 7.25). Since the countable union of measure zero sets is itself measure zero, we deduce that $\phi(\mcl{Z}(f)\cap V)$ is measure zero. Since this is true for any chart $(V,\phi)$, we conclude that $\mcl{Z}(f)$ is measure zero in $M$.

\end{proof}

Applying Theorem \ref{thm:analytic_zero_meas} to the special case of $\Ucal(d)$, we obtain Proposition~\ref{prop:zero_sets_analytical}.
\begin{proof}
    Since $\Ucal(d)$ is a compact group, it has a unique measure that is left invariant, called the Haar measure \cite{Alfsen1963}. Moreover, $\Ucal(d)$ is a compact Lie group~\cite{varadarajan1984}. From Proposition 4.1.14 in~\cite{abrahammarsden1978}, the Haar measure on $\Ucal(d)$ is a volume form, meaning that the Haar measure is absolutely continuous with respect to the Lebesgue measure on each local coordinate patch. Thus, it suffices to show that $\mcl{Z}(f) := \{U\in \Ucal(d) : f(U) = 0\}$ has zero measure in the sense of Theorem~\ref{thm:analytic_zero_meas}. 
    
    Note that the matrix entries $U_{ij}, U_{ij}^*$ are analytic functions on $\Ucal(d)$ (by the implicit function theorem, Theorem 6.3.2 in \cite{krantzparks2013}). Since a polynomial of analytic functions is also analytic, we deduce that $f:\Ucal(d)\to\R$ is analytic. Finally, by assumption $f\neq 0$ and we apply Theorem~\ref{thm:analytic_zero_meas}.
\end{proof}

As a final remark, we explain how our results imply that the simple structure of $\EcalKDC$ is robust under small perturbations of the bases. Recall that $\Omega$ is defined as the set of all unitary matrices $U$ that contain no zeroes. Then, with $\Gamma$ as defined in Eq.\eqref{eq:defGamma}, the set $\Omega\cap\Gamma$ is open. Consequently, if $U$
 has no zeroes and is a ``good''  unitary ($U \in \Gamma$), then all sufficiently small perturbations of $U$ are also ``good''. Thus, the simple structure of $\EcalKDC$ is robust under small perturbations of the bases.  Conversely, since $\Omega\cap\Gamma$ has full measure, it is also dense and thus if two set of basis projectors $\Acal$ and $\Bcal$ are such that $\conv{\Acal\cup\Bcal}\subsetneq \EcalKDC$, then there exist nearby sets of basis projectors $\Acal'$ and $\Bcal'$ for which  $\conv{\Acal'\cup\Bcal'}= \EcalKDC$. In other words, there exists a small perturbation of the sets of basis projectors $\Acal$ and $\Bcal$ that do have a simple associated $\EcalKDC$.  

\section{\texorpdfstring{Proof that $\Uex$ satisfies $\mathrm{Ker}(C^{\Uex}) = \mathrm{span}_{\R}\left\{X_k,Y_{\ell}\mid k,\ell \in\IntEnt{1}{d}\right\}$.}{Proof that the map C corresponding to U star satisfies has the minimal kernel}}\label{sec:App_Uex}

 We recall that $\Uex$ is defined in Eq.\eqref{eq:Ustar}. First, as $\mcl{N}$ is defined as $\left(\mathrm{span}_{\R}\left\{X_k,Y_{\ell}\mid k,\ell \in\IntEnt{1}{d}\right\}\right)^{\perp}$, a matrix $N\in \NT$ satisfies the following equations:
\begin{eqnarray}\label{eq:NTsum}
\nonumber \forall i\in\IntEnt{1}{d}, \ 0 = \inpr{X_i}{N} = \sum_{j=1}^{d} N_{ij} \\ \forall j\in\IntEnt{1}{d}, \ 0 = \inpr{Y_j}{N} = \sum_{i=1}^{d} N_{ij},
\end{eqnarray}
where we recall that orthogonality is with respect to the Hilbert-Schmidt inner product $\langle A,B\rangle=\Tr(A^\dagger B)$. This means that a matrix $N$ is in $\NT$ if and only if the sum over  every column and over every row is zero.
Next, we consider
$\Uex = \mathrm{exp}(id^{-1}J_{d}) = \id + zJ_{d}$ where $z= d^{-1}\left(e^{i}-1\right)$, $\id$ is the identity matrix, and $J_d$ is the matrix of ones. Note that $\Re{(z)}\neq 0$ and $\Im{(z)}\neq 0$. We want to prove that $C^{\Uex}\left|_{\mcl{N}}\right.$ is injective. 
Suppose that there exists $N\in \NT$ with $C^{\Uex}(N)=0$. Then, for all $j,k$:
\begin{eqnarray}\label{eq:NTex_1}
\nonumber -iC^{\Uex}_{jk}(N)&=&\sum_{l=1}^{d} (N_{jl} - N_{kl})(\Uex)^{*}_{kl} (\Uex)_{jl} \\
\nonumber &=& \left|z\right|^{2} \sum_{l=1}^{d} (N_{jl} - N_{kl}) + (N_{jj} - N_{kj}){z^*} \\
&+&  (N_{jk} - N_{kk})z \\
& = & (N_{jj} - N_{kj}){z^*} +  (N_{jk} - N_{kk})z,
\end{eqnarray}
where the first term from the second line is zero as the sum over every column of $N$ is zero by Eq.\eqref{eq:NTsum}. Since $C^{\Uex}(N)=0$, the real and imaginary parts of Eq.\eqref{eq:NTex_1} are zero, yielding that, for all $(j,k)\in\IntEnt{1}{d}^2$:
\begin{eqnarray}\label{eq:NTex_2}
 (N_{jj} - N_{kj})\Re{(z)} &= & (N_{kk} - N_{jk})\Re{(z)} ; \\
\label{eq:NTex_3}  (N_{kj} - N_{jj})\Im{(z)} &= & (N_{kk} - N_{jk})\Im{(z)}.
\end{eqnarray}
As $\Re{(z)}$ and $\Im{(z)}$ are not zero, adding Eq.\eqref{eq:NTex_2} and \eqref{eq:NTex_3} yields that, for all $(j,k)\in\IntEnt{1}{d}^2$:
\begin{equation}\label{eq:NTex_4}
N_{kk} =N_{jk}. 
\end{equation}
Finally, for a fixed $k\in\IntEnt{1}{d}$,   Eq.\eqref{eq:NTsum} implies  that:
\begin{equation}
0 = \sum_{j=1}^{d} N_{jk} =  \sum_{j=1}^{d} N_{kk} = d  N_{kk}.
\end{equation}
Thus, for all $k\in\IntEnt{1}{d}$, $N_{kk}=0$.  Eq.\eqref{eq:NTex_4} then implies that $N=0$. Consequently, as $\Ker\left(C^{\Uex}\left|_{\NT}\right.\right) = \{0\}$, $C^{\Uex}\left|_{\NT}\right.$ is injective, concluding our proof. \qed

\section{On the absence of KD-positivity preserving unitaries.}

 Let $(\cket{a_i})_{i\in\IntEnt{1}{d}}$ and $(\cket{b_j})_{j\in\IntEnt{1}{d}}$ be two orthonormal bases of a Hilbert space of dimension $d$. We denote by $U$ the transition unitary matrix between the two bases and by $\EcalKDC$ the set of KD positive states associated to the unitary matrix $U$. We say that $T\in\mcl{U}(d)$ is a KD-positivity preserving unitary if and only if $T\EcalKDC T^{\dagger} = \EcalKDC$. 
The goal of this section is to prove the following proposition, regarding KD-preserving unitaries.

\addtocounter{Prop}{+1}
\begin{Prop}\label{prop:noKDpreserve}
    In any dimension $d\geqslant 3$, there exists a set $\mcl{V}(d)$ of full Haar measure among the set of unitaries $\mcl{U}(d)$  for which the only KD-positivity preserving unitaries $T$ are given by the family $\left(T(\theta) = e^{i\theta}I_d\right)_{\theta\in\R}$.
\end{Prop}
We now suppose that $U\in\mcl{W}(d)$, where $\mcl{W}(d) = \Omega \cap \Gamma$. Here $\Omega$ is the set of all unitaries that do not have zeroes, and $\Gamma$ is defined in Eq.\eqref{eq:Ustar}. We thus have that $\mab>0$ and $\EcalKDC=\convAB$. This implies that $\EcalKDCpu=\Acal \cup \Bcal$. As $T$ preserves the extreme points of $\EcalKDC$, $T$ realises a permutation of the set of $2d$ elements $\Acal\cup\Bcal$. The following lemma shows that only specific permutations are allowed.

\begin{Lemma}\label{lem:KDpres10}
    Suppose that $(\cket{a_i})_{i\in\IntEnt{1}{d}}$ and $(\cket{b_j})_{j\in\IntEnt{1}{d}}$ are two orthonormal bases such that $\mab > 0$ and $\EcalKDC = \convAB$. If $T$ is KD-positivity preserving, then one of the two alternatives occurs:
    \begin{enumerate}
        \item  There exist permutations $(\sigma, \tau)$ and phases $(\eta_{i})_{i\in\IntEnt{1}{d}},(\theta_{j})_{j\in\IntEnt{1}{d}}\in\R^{2d}$ such that
    \begin{eqnarray}
        \nonumber\forall i\in\IntEnt{1}{d},\quad T\cket{a_i} &=& e^{i\eta_{i}}\cket{a_{\sigma(i)}},\\
        \forall j\in\IntEnt{1}{d},\quad T\cket{b_j} &=& e^{i\theta_{j}}\cket{b_{\tau(j)}};
    \end{eqnarray} 
    \item  There exist permutations $(\sigma, \tau)$ and phases $(\eta_{i})_{i\in\IntEnt{1}{d}},(\theta_{j})_{j\in\IntEnt{1}{d}}\in\R^{2d}$ such that
    \begin{eqnarray}
        \nonumber \forall i\in\IntEnt{1}{d},\quad T\cket{a_i} &=& e^{i\eta_{i}}\cket{b_{\sigma(i)}}, \\
        \forall j\in\IntEnt{1}{d},\quad T\cket{b_j} &=& e^{i\theta_{j}}\cket{a_{\tau(j)}}.
    \end{eqnarray} 
    \end{enumerate}
\end{Lemma}
\begin{proof}
Suppose that we are not in case \textit{1}. By reordering the bases $(\cket{a_i})_{i\in\IntEnt{1}{d}}$ and $(\cket{b_j})_{j\in\IntEnt{1}{d}}$, we can suppose that $T\cket{a_1}=e^{i\eta_{1}}\cket{b_1}$, where $\eta_{1}\in\R$. Then, suppose that, for some $\ell\in\IntEnt{1}{d}$ and some $\eta_{2}\in\R$, $T\cket{a_2} = e^{i\eta_{2}}\cket{a_{l}}$. This implies that
\begin{equation}
    0=\bracket{a_1}{a_2} = \bra{a_2}T^{\dagger}T\cket{a_1} = e^{i(\eta_{1}-\eta_{2})}\bracket{a_{\ell}}{b_1}.
\end{equation}
Since $\mab > 0$, we obtain a contradiction. Moreover, as $T$ is unitary and preserves the set $\Acal\cup\Bcal$,  there exist $j\in\IntEnt{2}{d}$ and $\eta_{2}\in\R$ such that $T\cket{a_2} = e^{i\eta_{2}}\cket{b_j}$. Repeating this argument, we obtain that there exist a permutation $\sigma$ and phases $\left(\eta_{i}\right)_{i\in\IntEnt{1}{d}}$ such that
\begin{equation}
\forall i\in\IntEnt{1}{d},\ T\cket{a_i}=e^{i\eta_{i}}\cket{b_{\sigma(i)}}.
\end{equation}

Now, we turn our attention to $\left(T\cket{b_j}\right)_{j\in\IntEnt{1}{d}}$. Suppose that there exists $j\in\IntEnt{1}{d}$ and $\theta_{1}\in\R$ such that $T\cket{b_1} = e^{i\theta_{1}}\cket{b_j}$. Then, there exists $i\in\IntEnt{1}{d}$ such that $\sigma(i)\neq j$. Thus, 
\begin{equation}
    \bracket{a_i}{b_1} = \bra{a_i}T^{\dagger}T\cket{b_1} = e^{i(\theta_{1}-\eta_{i})} \bracket{b_{\sigma(i)}}{b_j}=0,
\end{equation}
which contradicts $\mab > 0$. Thus, $T\Bcal T^{\dagger} = \Acal$. As above, there exist a permutation $\tau$ and phases $(\theta_{j})_{j\in\IntEnt{1}{d}}\in\R^{d}$ such that
\begin{equation}
    \forall j\in\IntEnt{1}{d},\ T\cket{b_j}=e^{i\theta_{j}}\cket{a_{\tau(j)}}.
\end{equation}
This proves that we are in case \textit{2}.
\end{proof}

\begin{proof}[Proof of Proposition~\ref{prop:noKDpreserve}]
    Let $U\in\mcl{W}(d)$. Suppose that there exists a KD-positivity preserving unitary $T$. By Lemma~\ref{lem:KDpres10}, we have to consider two cases. In case \textit{1}, there exist two permutations $(\sigma,\tau)$ and $2d$ phases $(\eta_{i})_{i\in\IntEnt{1}{d}},(\theta_{j})_{j\in\IntEnt{1}{d}}$ such that
    \begin{eqnarray}
        \nonumber \forall i\in\IntEnt{1}{d},\quad T\cket{a_i} = e^{i\eta_{i}}\cket{a_{\sigma(i)}} \\
        \label{eq:defV1} \forall j\in\IntEnt{1}{d},\quad T\cket{b_j} = e^{i\theta_{j}}\cket{b_{\tau(j)}}.
    \end{eqnarray}
Using Eq.\eqref{eq:defV1}, we obtain that for all $(j,k)\in\IntEnt{1}{d}^2$:
    \begin{equation}\label{eq:matrix1}
         e^{i\eta_{k}}\bracket{a_k}{b_j} = e^{i\theta_{j}}\bracket{a_{\sigma(k)}}{b_{\tau(j)}}.
    \end{equation}
Suppose that $(\sigma,\tau)\neq (\mathrm{id},\mathrm{id})$. Without loss of generality, we can suppose that $\sigma \neq \mathrm{id}$. Thus, there exists $p\in\IntEnt{1}{d}$ such that $\sigma(p)\neq p$. Relations~\eqref{eq:matrix1} then imply that the set of unitary matrices $U\in \mcl{W}(d)$ for which the unitary $T$ is KD-positivity preserving is included in $\mcl{Z}\left(f_{\sigma,\tau}\right)$ where
\begin{equation}
    f_{\sigma,\tau} : U\in\mcl{U}(d) \to \left|U_{p1}\right|^2 - \left|U_{\sigma(p)\tau(1)}\right|^2\in\R.
\end{equation}
We recall that $\mcl{Z}\left(f_{\sigma,\tau}\right)$ denotes the set of zeroes of the function $f_{\sigma,\tau}$. Note that $f_{\sigma,\tau}$ is a polynomial in the real and imaginary parts of the elements of the matrix. 

We now construct a unitary matrix $U_{\star,1}$ such that $f_{\sigma,\tau}(U_{\star,1})\neq 0$. We define two orthonormal vectors of $\C^{d}$ given by:
\begin{equation}
    \widetilde{u}^{1}_{\star} = \frac{1}{\sqrt{d(d-1)}}\left(1 \cdots 1 -(d-1) 1 \cdots 1\right)^{\mathrm{T}}
    \end{equation}
and
\begin{equation}\widetilde{u}^{2}_{\star} = \frac{1}{\sqrt{d}}\left(1 \cdots 1\right)^{\mathrm{T}}
\end{equation}
where the coefficient $-\sqrt{\frac{d-1}{d}}$ is in place $p$. By the Gram-Schmidt process, there exists $(\widetilde{u}^{j}_{\star})_{j\in\IntEnt{3}{d}}$ such that $(\widetilde{u}^{j}_{\star})_{j\in\IntEnt{1}{d}}$ is an orthonormal basis of $\C^{d}$. We now define $U_{\star,1}$ to be 

\begin{equation}
     \begin{cases}
    \begin{psmallmatrix}
        \widetilde{u}^{1}_{\star} & \widetilde{u}^{\tau(1)}_{\star} &\widetilde{u}^{3}_{\star} & \cdots & \widetilde{u}^{\tau(1)-1}_{\star} & \widetilde{u}^{2}_{\star} &&\widetilde{u}^{\tau(1)+1}_{\star} & \cdots & \widetilde{u}^{d}_{\star}
    \end{psmallmatrix}&\mathrm{if \ }\tau(1)\ne 1\\
    \begin{psmallmatrix}
        \widetilde{u}^{1}_{\star} & \cdots & \widetilde{u}^{d}_{\star}
    \end{psmallmatrix}&\mathrm{if \ }\tau(1)=1
    \end{cases}.
\end{equation}

We then compute
\begin{equation}
    f_{\sigma,\tau}(U_{\star,1}) = \left\{\begin{array}{cc}
        \frac{d-2}{d} \neq 0& \mathrm{if \ } \tau(1)\neq 1 \\
         \frac{2-d}{d(d-1)} \neq 0& \mathrm{if \ } \tau(1)=1
    \end{array}\right. .
\end{equation}
Consequently, $\mcl{Z}\left(f_{\sigma,\tau}\right)$ is, by Proposition 2.4, a set of zero Haar measure.
To conclude case \textit{1}, suppose that $(\sigma,\tau)=(\mathrm{id},\mathrm{id})$. The relations~\eqref{eq:matrix1} imply that
\begin{equation}
    \forall (j,k)\in\IntEnt{1}{d}^2, e^{i\eta_{j}}=e^{i\theta_{k}}.
\end{equation}
We thus get that, according to the definition of $T$ given by equations~\eqref{eq:defV1}, $\theta_{j}=\eta_{1}~~[2\pi]$ and $\eta_{j} = \eta_{1}~[2\pi]$ for all $j\in\IntEnt{1}{d}$. Then, $T = e^{i\eta_{1}}I_d$.

In case \textit{2}, there exist two permutations $(\sigma,\tau)\in\Sigma(d)$ and $2d$ phases $(\eta_{i})_{i\in\IntEnt{1}{d}}$ and $(\theta_{j})_{j\in\IntEnt{1}{d}}$ such that
    \begin{eqnarray}
        \nonumber \forall i\in\IntEnt{1}{d}, & T\cket{a_i} = e^{i\eta_{i}}\cket{b_{\sigma(i)}}, \\
        \label{eq:defV2}\forall j\in\IntEnt{1}{d}, & T\cket{b_j} = e^{i\theta_{j}}\cket{a_{\tau(j)}}.
    \end{eqnarray}
We obtain similarly that, for all $(j,k)\in\IntEnt{1}{d}$:
\begin{equation}\label{eq:matrix2}
        e^{i\eta_{k}}\bracket{a_k}{b_j} = e^{i\theta_{j}}\bracket{b_{\sigma(k)}}{a_{\tau(j)}}.
    \end{equation}
 Note that there exists $p\in\IntEnt{1}{d}$ such that $\sigma(p)\neq 1$. Thus, relations~\eqref{eq:matrix2} imply that the set of unitary matrices $U\in \mcl{W}(d)$ for which the unitary $T$ is KD-positivity preserving is included in $\mcl{Z}\left(g_{\sigma,\tau}\right)$ where
\begin{equation}
    g_{\sigma,\tau} : U\in\mcl{U}(d) \to \left|U_{p1}\right|^2 - \left|U_{\tau(1)\sigma(p)}\right|^2\in\R.
\end{equation}
Note that $g_{\sigma,\tau}$ is a polynomial in the real and imaginary parts of the elements of the matrix. 

We now construct a unitary matrix $U_{\star,2}$ such that $g_{\sigma,\tau}(U_{\star,2})\neq 0$. We define $U_{\star,2}$ to be 
\begin{equation}
U_{\star,2} = \begin{psmallmatrix}
        \widetilde{u}^{1}_{\star} & \widetilde{u}^{\sigma(p)}_{\star} &\widetilde{u}^{3}_{\star} & \cdots & \widetilde{u}^{\sigma(p)-1}_{\star} & \widetilde{u}^{2}_{\star} & \widetilde{u}^{\sigma(p)+1}_{\star} & \cdots & \widetilde{u}^{d}_{\star}
    \end{psmallmatrix}.
\end{equation}
We then compute
\begin{equation}
    g_{\sigma,\tau}(U_{\star,2}) = \frac{d-2}{d} \neq 0.
\end{equation}
Consequently, $\mcl{Z}\left(g_{\sigma,\tau}\right)$ is, by Proposition 2.4, a set of zero Haar measure.

Consequently, the set of matrices $U$ that admit a KD-preserving unitary that is not a global phase change is included in the set
\begin{equation}
    \Lambda(d) = \mcl{W}(d)^{\mathrm{C}} \cup \bigcup_{\substack{(\sigma,\tau) \\ (\sigma,\tau)\neq (\mathrm{id},\mathrm{id})}} \mcl{Z}\left(f_{\sigma,\tau}\right) \  \cup \bigcup_{(\sigma,\tau)} \mcl{Z}\left(g_{\sigma,\tau}\right). 
\end{equation}
As $\mcl{W}(d)$ is of full Haar measure, we obtain that $\Lambda(d)$ is of zero Haar measure. Defining $\mcl{V}(d)$ to be the complementary set of $\Lambda(d)$, we conclude the proof.  
\end{proof}
This proposition thus proves that, with probability one in any dimension $d\geqslant 3$, the only KD-positivity preserving unitaries are given by $(e^{i\theta}I_d)_{\theta\in\R}$ and these unitaries act trivially on $\Acal\cup\Bcal$.

\section{\texorpdfstring{KD distribution on an $n$-qubit system}{KD distribution on an n-qubit system}}\label{sec:Qubits}
We have shown that, with probability one, the set of KD-positive states $\EcalKDC$ is an explicitly known polytope with $2d$ vertices. In this section we show through an example that in specific cases, the structure of $\EcalKDC$ can be much more involved.

We consider for that purpose a  KD distribution naturally associated to an $n$-qubit system. For $n=1$, we consider the computational basis $\mathbf{A}_{1} = \left(\cket{0},\cket{1}\right)$ and the basis $\mathbf{B}_{1} = \left(\cket{+},\cket{-}\right)$. The transition matrix between these two bases is the real Hadamard matrix, which is also the discrete Fourier Transform (DFT) matrix. The only KD-positive pure states are in that case the basis states. 

If $n\in\N^{*}$, we consider the KD representation of a system of $n$ qubits given by the following two tensor product bases :
\begin{eqnarray}
    \nonumber\mathbf{A}_{n} &=& \left\{\bigotimes_{i=1}^{n}\cket{\epsilon_i} \mid \epsilon_{i}\in\left\{0,1\right\}\right\} \\ \mathbf{B}_{n} &=& \left\{\bigotimes_{i=1}^{n}\cket{\eta_i} \mid \eta_{i}\in\left\{-,+\right\}\right\}.
\end{eqnarray}
This KD representation can be viewed as being associated to the group $\left(\Z_{2}\right)^{n}$ and its Heisenberg group as described in~\cite{debievre2025kirkwooddiracrepresentationassociatedfourier}. This allows one to identify the set of KD-positive pure states, which we denote by $\mcl{E}_{\mathrm{KD},n}^{\mathrm{pure}}$. The case $n=2$ is explicitly worked out below.  One can then determine the  cardinality $\mid\mcl{E}_{\mathrm{KD},n}^{\mathrm{pure}}\mid$ of $\mcl{E}_{\mathrm{KD},n}^{\mathrm{pure}}$:
\begin{equation}
\left|\mcl{E}_{\mathrm{KD},n}^{\mathrm{pure}} \right| = \sum_{p=0}^{n} 2^{n}\frac{\prod_{k=1}^{p} (2^{n} - 2^{k-1})}{\prod_{k=1}^{p} (2^{p} - 2^{k-1})}.
\end{equation}
One can prove the following inequalities  :
\begin{equation}
    2^{\frac{n^2}{4}}\leqslant \left|\mcl{E}_{\mathrm{KD},n}^{\mathrm{pure}} \right| \leqslant 2^{(n+1)^2+1}.
\end{equation}
This proves that the KD distribution naturally associated to $n$ qubits has $O(d^{\log_{2}(d)})$ KD-positive pure states, where $d=2^{n}$ is the dimension of the underlying Hilbert space. This observation illustrates the complexity of the structure of $\conv{\mcl{E}_{\mathrm{KD},n}^{\mathrm{pure}}}$ in comparison with the simplest situation that occurs with probability $1$ where $\left|\EcalKDCpu \right| = \left|\Acal \cup \Bcal \right| = 2d$. 

When $n=2$, we list in Table~\ref{table:KDpos2} the $20$ KD-positive pure states. In this case, the underlying Hilbert space is of dimension $4$. The number of KD-positive pure states in the simplest situation would be $8$, occurring with probability $1$. 

\begin{table}[H]
    \centering
    \begin{tabular}{|c|c|c|c|}
         \hline
         \ $\cket{00}$\ & \ $\cket{01}$ \ & \ $\cket{10}$ \ & \ $\cket{11}$\ \\
         \hline
         \  $\frac{\cket{00} + \cket{01}}{\sqrt{2}}$ \ & \ $\frac{\cket{00} - \cket{01}}{\sqrt{2}}$ \ & \ $\frac{\cket{10} + \cket{11}}{\sqrt{2}}$ \ & \ $\frac{\cket{10} - \cket{11}}{\sqrt{2}}$ \ \\
         \hline
         \ $\frac{\cket{00} + \cket{10}}{\sqrt{2}}$ \ & \ $\frac{\cket{00} - \cket{10}}{\sqrt{2}}$\ & \ $\frac{\cket{01} + \cket{11}}{\sqrt{2}}$\ &\ $\frac{\cket{01} - \cket{11}}{\sqrt{2}}$\ \\
         \hline
         \ $\frac{\cket{00} + \cket{11}}{\sqrt{2}}$\ & \ $\frac{\cket{00} - \cket{11}}{\sqrt{2}}$\ &\ $\frac{\cket{01} + \cket{10}}{\sqrt{2}}$\ &\ $\frac{\cket{01} - \cket{10}}{\sqrt{2}}$\ \\
         \hline
         \ $\cket{++}$ \ & \ $\cket{+-}$ \  & \ $\cket{-+}$ \ & \ $\cket{--}$\ \\
         \hline
    \end{tabular}
    \caption{List of KD-positive pure states for the KD representation associated with $2$ qubits.}
    \label{table:KDpos2}
\end{table}

We note that, as shown in~\cite{debievre2025kirkwooddiracrepresentationassociatedfourier}, for the $2$ qubit KD-representation, we have that :
\begin{equation}
\conv{\mcl{E}_{\mathrm{KD},2}^{\mathrm{pure}}} \subsetneq \EcalKDC.
\end{equation}
In other words, in that case, the set of KD-positive states may not be a polytope and contains exotic states. We expect this to be true in all dimensions.

\section{\texorpdfstring{Witnesses of $\convAB$}{Witnesses of conv(A union B)}}
\label{appendix:witness}

In this section, we show how the simple structure of $\convAB$ can be used to construct estimates of the distance of a state $\rho_{\star}$ to $\convAB$. In~\cite{langrenez2023characterizing}, the authors proved that the facets of $\convAB$ are given by $\conv{\Acal_{i}\cup\Bcal_{j}}$ for all $(i,j)\in\IntEnt{1}{d}^2$ where 
\begin{eqnarray*}
    \Acal_{i} &=& \left\{\cket{a_\ell}\bra{a_\ell} \mid \ell\in\IntEnt{1}{d}\backslash \{i\}\right\}\ \\
    \Bcal_{j} &=& \left\{\cket{b_k}\bra{b_k} \mid k\in\IntEnt{1}{d}\backslash \{j\}\right\}.
\end{eqnarray*}
For a fixed $(i,j)\in\IntEnt{1}{d}^2$, the bounding plane associated to the facet is given by
\begin{eqnarray*}
    &\cket{a_p}\bra{a_p} + & \\\nonumber &\spanR{\left\{\cket{a_\ell}\bra{a_\ell}-\cket{a_p}\bra{a_p}, \cket{b_k}\bra{b_k}-\cket{a_p}\bra{a_p}\right\}}_{\substack{\ell\neq i\\ k\neq j}} &
\end{eqnarray*}
where $p = \min \IntEnt{1}{d}\backslash \{i\}$. This set is an affine hyperplane inside the set $E = \spanRAB \cap \left\{F\in\SAO \mid \Tr(F) =1\right\}$ and a bounding plane of $\convAB$. Thus, there exists $\eta_{ij}\in E$ such that :
\begin{equation}
    \forall \rho\in\convAB, \Tr(\eta_{ij}\rho) \leqslant \Tr(\eta_{ij}\cket{a_p}\bra{a_p}).
\end{equation}
Such witnesses can be explicitly constructed thanks to the simple structure of $\convAB$. As an example, when the transition matrix is the DFT matrix in dimension $3$, the witnesses are given by 
\begin{equation}
    \forall (i,j)\in\IntEnt{1}{3}^2, \eta_{ij} = I_3 - \cket{a_i}\bra{a_i} - \cket{b_j}\bra{b_j}.
\end{equation} 

Suppose that such a witness $\eta_{ij}$ is given and that $\rho_{\star}$ is a quantum state satisfying that $\Tr(\eta_{ij}\rho_{\star}) > \Tr(\eta_{ij}\cket{a_p}\bra{a_p})$. Then, 
\begin{equation}\label{eq:witnesses}
    0 < \frac{\Tr\left(\eta_{ij}\left(\rho_{\star} - \cket{a_p}\bra{a_p}\right)\right)}{\norme{\eta_{ij}}{2}} \leqslant \mathrm{d}\left(\rho_{\star},\convAB\right)
\end{equation}
where $\mathrm{d}\left(\rho_{\star},A\right)$ denotes the distance of $\rho_{\star}$ to the set $A$. 

One can then use Theorem~\ref{thm:principal} from the paper in the following manner. If the transition matrix is in $\Gamma\cap\Omega$, we know $\EcalKDC=\convAB$, so that we obtain a lower bound on the distance of $\rho_{\star}$ to $\EcalKDC$, given by
\begin{equation}
 \mathrm{d}\left(\rho_{\star},\EcalKDC\right)\geqslant \frac{\Tr(\eta_{ij}\rho_{\star}) - \Tr(\eta_{ij}\cket{a_p}\bra{a_p})}{\norme{\eta_{ij}}{2}}>0.
\end{equation}
Obtaining such a bound whenever $\convAB\subsetneq \EcalKDC$ is considerably more difficult because in that case $\EcalKDC$ may not be a polytope, and even when it is, it may be very difficult to determine its bounding planes. 

\section{Simulability}
\label{appendix:simmulability}

A KD-positive quantum circuit is composed of a KD-positive input state, KD-positive unitary gates and KD-positive measurement elements. KD-positive states and KD-positive measurement elements were defined in the main text. KD-positive gates are defined as follows. Let $T$ be a unitary on $\Hcal$, then there exist $K_{ij:i'j'}\in\C$ so that
\begin{equation}
Q(T\rho T^\dagger)_{ij}=\sum_{i'j'} K_{ij:i'j'}Q_{i'j'}(\rho),
\end{equation}
with
\begin{equation*}
    \sum_{ij} K_{ij:i'j'}=1, \forall i',j'.
\end{equation*}
A unitary $T$ is said to be KD-positive if all $K_{ij:i'j'}\geqslant 0$. As explained in the Discussion section, such circuits can be efficiently simulated classically. 

To see how Theorem~\ref{thm:principal} impacts the problem of efficient simulability, we first note that, clearly, a KD-positive unitary is KD-positivity preserving. Hence Theorem~\ref{thm:principal}  implies that, with probability one, the only quantum circuits that are KD-positive are those that have as input states (convex mixtures of) the basis states, followed immediately by a KD-positive measurement. 

\end{document}